\newtheorem{thm}{Theorem}
\newtheorem{lem}{Lemma}
\newcommand{\expect}[1]{\mathbb{E}\left[#1\right]}
\newcommand{\defequiv}{\mbox{\raisebox{-.3ex}{$\overset{\vartriangle}{=}$}}}
\newcommand{\norm}[1]{||{#1}||}
\newcommand{\bv}[1]{{\boldsymbol{#1} }}
\newcommand{\script}[1]{{{\cal{#1} }}}
\begin{document}

\title
  {A Lyapunov Optimization Approach to Repeated Stochastic Games}
\author{Michael J. Neely\\University of Southern California\\\url{http://www-bcf.usc.edu/~mjneely}
\thanks{This paper was presented in part at the Allerton conference on communication, control, and computing, Monticello, IL, Oct. 2013 \cite{repeated-games-allerton}.} 
\thanks{The author is with the  Electrical Engineering department at the University
of Southern California, Los Angeles, CA.} 
\thanks{This work is supported in part  by one or more of:  the NSF Career grant CCF-0747525,  NSF grant 1049541,  the 
Network Science Collaborative Technology Alliance sponsored
by the U.S. Army Research Laboratory W911NF-09-2-0053.}
}

\markboth{}{Neely}

\maketitle

\begin{abstract}   
This paper considers a time-varying game with $N$ players.  Every time slot, players observe their own random events and then take a  control action.  The events and control actions affect the individual utilities earned by each player.   The goal is to maximize a concave function of time average utilities subject to equilibrium constraints.  Specifically, participating players are provided access to a common source of randomness from which they can optimally correlate their decisions.  The equilibrium constraints incentivize participation by ensuring that players cannot earn more utility if they choose not to participate.  This form of equilibrium is similar to the notions of Nash equilibrium and correlated equilibrium, but is simpler to attain.  A Lyapunov method is developed that solves the problem in an online \emph{max-weight} fashion by selecting actions based on a set of time-varying weights.  The algorithm does not require knowledge of the event probabilities and has polynomial convergence time.  A similar method can be used to compute a standard correlated equilibrium, albeit with increased complexity.  
\end{abstract} 

\section{Introduction}

Consider a repeated game with $N$ \emph{players} and one \emph{game manager}.  The game is played over an infinite sequence of time slots
 $t \in \{0, 1,2, \ldots\}$.  Every slot $t$ there is a \emph{random event vector} $\bv{\omega}(t) = (\omega_0(t), \omega_1(t), \ldots, \omega_N(t))$. 
 The game manager observes the full vector $\bv{\omega}(t)$, while each player $i \in \{1, \ldots, N\}$ observes only the component $\omega_i(t)$. 
 The value $\omega_0(t)$ represents information known only to the manager.  After the slot $t$ event is observed, the game manager sends a message
 to each player $i$.  Based on this message, the players choose a   \emph{control action $\alpha_i(t)$}. The random event and the collection of all control actions for slot $t$ determine individual \emph{utilities} $u_i(t)$ for each player $i \in \{1, \ldots, N\}$.  Each player is interested in maximizing the time average of its own utility process.  The game manager is interested in providing messages that lead to a fair allocation of time average utilities across players. 
 
 Specifically, let $\overline{u}_i$ be the time average of $u_i(t)$.  The \emph{fairness} of an achieved vector of time average utilities 
is defined by a \emph{concave fairness function} $\phi(\overline{u}_1, \ldots, \overline{u}_N)$.  
The goal is to devise strategies that maximize $\phi(\overline{u}_1, \ldots, \overline{u}_N)$ subject to certain game-theoretic equilibrium constraints.   
For example, suppose the fairness function is a sum of logarithms:
\[ \phi(\overline{u}_1, \ldots, \overline{u}_N) = \sum_{i=1}^N \log(\overline{u}_i) \]
This corresponds to \emph{proportional fair utility maximization}, a concept often studied in the context of  communication networks \cite{kelly-charging}.  
Another natural concave fairness function is: 
\[ \phi(\overline{u}_1, \ldots, \overline{u}_N) = \min[\overline{u}_1, \ldots, \overline{u}_N, c] \]
for some given constant $c>0$.   This fairness function assigns no added value when the average utility of one player exceeds that of another. 
 
Let $\bv{M}(t) = (M_1(t), \ldots, M_N(t))$ be the \emph{message vector} provided by the game manager on slot $t$.  The value $M_i(t)$ is an element of the set $\script{A}_i$ and represents the action the manager would like player $i$ to take. 
A player $i \in \{1, \ldots, N\}$ is said to \emph{participate} if she always chooses the suggestion of the manager, that is, if 
$\alpha_i(t) = M_i(t)$ for all $t \in \{0, 1, 2, \ldots\}$.   At the beginning of the game, each player makes a participation agreement.  Participating players receive the messages $M_i(t)$, while non-participating players do not. 

This paper considers the class of algorithms that deliver message vectors $\bv{M}(t)$ as a stationary and randomized function of the observed $\bv{\omega}(t)$.   Assuming that all players participate, this induces a conditional probability distribution on the actions, given the current $\bv{\omega}(t)$. The conditional distribution is defined as a \emph{coarse correlated equilibrium} (CCE) if it yields a time average utility vector $(\overline{u}_1, \ldots, \overline{u}_N)$ with the following property \cite{CCE}:  For each player $i \in \{1, \ldots, N\}$, the average utility $\overline{u}_i$  is at least as large as the maximum time average utility this player could achieve if she did not participate (assuming the actions of all other players do not change).    Overall, the goal is to maximize $\phi(\overline{u}_1, \ldots, \overline{u}_N)$ subject to the CCE constraints.  

\subsection{Contributions and related work} 

The notion of \emph{coarse correlated equilibrium} (CCE) was introduced in \cite{CCE} in the static case where there
is no event process $\bv{\omega}(t)$. The CCE definition is 
similar to a \emph{correlated equilibrium}  (CE) \cite{aumann-correlated-eq1}\cite{aumann-correlated-eq2}\cite{game-theory-book}.  The difference is as follows: A correlated equilibrium (CE) is more stringent and requires the utility achieved by each player $i$ to be at least as large as the utility she could achieve if she did not participate \emph{but if she still knew the $M_i(t)$ messages on every slot}.   It is known that both CCE and CE constraints can be written as  linear programs.  Adaptive methods that converge to a CE for static games 
are developed in  \cite{foster-vohra-games-CE}\cite{hart-correlated-games-CE}\cite{fudenberg-games-CE}. 
The concept of \emph{Nash equilibrium} (NE) is more stringent still:  The NE constraint requires all players to act independently and without the aid of a message process $\bv{M}(t)$ \cite{nash-games}\cite{game-theory-book}.  Unfortunately, the problem of computing a Nash equilibrium is nonconvex. 

This paper uses the NE, CE, and CCE concepts in the context of a stochastic game with random 
events $\bv{\omega}(t)$. The optimal action 
associated with a particular event can depend on whether or not the event is rare. This paper develops an online algorithm that is influenced by the event probabilities, but does not require knowledge of these probabilities.   
The algorithm uses the Lyapunov optimization theory of \cite{sno-text}\cite{now} and is of the \emph{max-weight} type.  Specifically, every slot $t$, the game manager observes the $\bv{\omega}(t)$ realization and chooses a suggestion vector  
by greedily minimizing a \emph{drift-plus-penalty expression}.  Such Lyapunov methods are used
extensively in the context of queueing
 networks \cite{tass-server-allocation}\cite{neely-fairness-ton} (see also related methods in 
 \cite{atilla-fairness-ton}\cite{stolyar-gpd-gen}\cite{tutorial-lin}).    This is perhaps the first use of such techniques in a game-theoretic setting.   
 
One reason the solution of this paper can have a simple structure is that the random event process $\bv{\omega}(t)$ is assumed to be independent of the prior control actions.  Specifically, while the components $\omega_i(t)$ are allowed to be arbitrarily correlated across 
$i \in \{0, 1, \ldots, N\}$, the vector $\bv{\omega}(t)$ is assumed to be 
independent and identically distributed (i.i.d.) over slots.  Prior work on stochastic games considers more complex problems where 
$\bv{\omega}(t+1)$ is influenced by the control action of slot $t$, including work in \cite{cor-eq-stochastic-games} which studies correlated equilibria in this context.  This typically involves Markov decision theory and has high complexity. 
Specifically, if $\Omega_i$ is the set of all possible values of $\omega_i(t)$, and if $|\Omega_i|$ is the (finite) size of this set, then complexity is typically \emph{at least} as large as $\prod_{i=1}^N|\Omega_i|$. 

In contrast, while the current paper treats a stochastic problem with more limited structure, the resulting solution is simple and grows as $\sum_{i=1}^N |\Omega_i|$. Specifically, the algorithm uses a number of \emph{virtual queues} that is linear in $N$, rather than exponential in $N$, resulting in polynomial bounds on convergence time.   
Furthermore, the number of virtual queues grows only linearly in the size of each set $\Omega_i$.  This improves on the original conference version of this paper \cite{repeated-games-allerton}, which 
required a number of virtual queues that was exponential in the size of $\Omega_i$.  The exponential-to-polynomial 
improvement is done by equivalently modeling the constraints via a grouping of  conditional expectations given an observed random event.

 \section{Static games} \label{section:static-games} 
 
 This section introduces the problem in the \emph{static case} without random processes $\omega_0(t), \omega_1(t), \ldots, \omega_N(t)$.  The different forms of equilibrium are defined and compared through a simple example. 
   The general stochastic problem is treated in Section \ref{section:stochastic-games}. 
   
   Suppose there are $N$ players, where $N$ is an integer larger than 1.  Each player $i \in \{1, \ldots, N\}$ 
   has an \emph{action space} $\script{A}_i$, assumed to be a finite set.   The game operates in slotted time $t \in \{0, 1, 2, \ldots\}$. 
   Every slot $t$, each player $i$ chooses an action $\alpha_i(t) \in \script{A}_i$.  Let $\bv{\alpha}(t) = (\alpha_1(t), \ldots, \alpha_N(t))$ 
   be the vector of control actions on slot $t$.  The utility $u_i(t)$ earned by player $i$ on slot
   $t$ is a real-valued function of $\bv{\alpha}(t)$: 
   \[ u_i(t) = \hat{u}_i(\bv{\alpha}(t)) \: \: \forall i \in \{1, \ldots, N\}   \]
   The utility functions $\hat{u}_i(\bv{\alpha})$ can be different for each player $i$.    Define $\script{A} = \script{A}_1  \times \cdots \times \script{A}_N$.   Consider starting with a particular vector $\bv{\alpha} \in \script{A}$ and  modifying it by changing a single entry $i$ from $\alpha_i$ to some other action $\beta_i$. This new vector is represented by the notation $(\beta_i, \bv{\alpha}_{\overline{i}})$. 
 Define $\script{A}_{\overline{i}}$ as the set of all vectors $\bv{\alpha}_{\overline{i}}$, being the set product of 
 $\script{A}_j$ over all  $j \neq i$.

   The three different forms of equilibrium considered in this section are defined by probability mass functions $Pr[\bv{\alpha}]$ for $\bv{\alpha} \in \script{A}$. It is assumed throughout that: 
   \begin{itemize} 
   \item  $Pr[\bv{\alpha}] \geq 0$ for all $\bv{\alpha} \in \script{A}$.
   \item  $\sum_{\bv{\alpha}\in\script{A}} Pr[\bv{\alpha}] = 1$. 
   \end{itemize} 
    If actions $\bv{\alpha}(t)$ are chosen independently every slot according to the same probability mass function $Pr[\bv{\alpha}]$, the law of large numbers ensures that, with probability 1, the time average utility of each player $i \in \{1, \ldots, N\}$ 
is: 
\[ \mbox{$\overline{u}_i = \sum_{\bv{\alpha}\in\script{A}} Pr[\bv{\alpha}]\hat{u}_i(\bv{\alpha})$} \]

\subsection{Nash equilibrium (NE)} 

The standard concept of Nash equilibrium from \cite{nash-n-person-games}\cite{nash-games}
assumes players take independent actions, so that: 
\begin{equation} \label{eq:NE-ind} 
Pr[\bv{\alpha}] = \prod_{i=1}^{N}Pr[\alpha_i(t) = \alpha_i] 
\end{equation} 
A probability mass function $Pr[\bv{\alpha}]$ is a 
\emph{mixed strategy Nash equilibrium (NE)} if it satisfies: 
\begin{eqnarray}
\sum_{\bv{\alpha} \in \script{A}} Pr[\bv{\alpha}]\hat{u}_i(\bv{\alpha}) \geq \sum_{\bv{\alpha}\in\script{A}} Pr[\bv{\alpha}] \hat{u}_i(\beta_i, \bv{\alpha}_{\overline{i}}) \nonumber \\
\forall i \in \{1, \ldots, N\}, \forall \beta_i \in \script{A}_i \label{eq:nash-static} 
\end{eqnarray}

\subsection{Correlated equilibrium (CE)} 

The standard concept of correlated equilibrium from \cite{aumann-correlated-eq1}\cite{aumann-correlated-eq2} can be motivated by a game manager that provides suggested
actions $(\alpha_1(t), \ldots, \alpha_N(t))$ every slot $t$, where player $1$ only sees $\alpha_1(t)$, player $2$ only sees $\alpha_2(t)$, and so on.  Assume the suggestion vector is  independent and identically distributed (i.i.d.)  over slots with some probability mass function $Pr[\bv{\alpha}]$.
Assume all players participate, so that every slot their chosen actions match the suggestions.  The probability mass function $Pr[\bv{\alpha}]$ is a \emph{correlated equilibrium (CE)} if:  
 \begin{align}
 &\sum_{\bv{\alpha}_{\overline{i}}\in\script{A}_{\overline{i}}} Pr[\alpha_i, \bv{\alpha}_{\overline{i}}] \hat{u}_i(\alpha_i, \bv{\alpha}_{\overline{i}}) \geq   \sum_{\bv{\alpha}_{\overline{i}}\in\script{A}_{\overline{i}}} Pr[\alpha_i, \bv{\alpha}_{\overline{i}}]\hat{u}_i(\beta_i, \bv{\alpha}_{\overline{i}}) \nonumber \\
&\forall i \in \{1, \ldots, N\}, \forall \alpha_i \in \script{A}_i, \forall \beta_i \in \script{A}_i  \mbox{ with $\beta_i \neq \alpha_i$}  \label{eq:ce-static}
 \end{align}

 These constraints imply that no player can gain a larger average utility by individually 
 deviating from the suggestions of the game manager \cite{aumann-correlated-eq2}.  This can be understood as follows:   Fix an $i \in \{1, \ldots, N\}$ and an $\alpha_i \in \script{A}_i$ such that $Pr[\alpha_i(t)=\alpha_i]>0$. Divide both sides of the above inequality by $Pr[\alpha_i(t)=\alpha_i]$. Then: 
 \begin{itemize} 
 \item The left-hand-side is the conditional expected utility of player $i$, given that all players participate and that player $i$ 
 sees suggestion $\alpha_i$ on the current slot.  
 \item  The right-hand-side is the conditional expected utility of player $i$, given that she sees $\alpha_i$ on the current slot, that all other players $j \neq i$ participate, and that player $i$ chooses action $\beta_i$ instead of $\alpha_i$ (so player $i$ does \emph{not} participate). 
 \end{itemize} 
 
  The correlated equilibrium constraints are linear in the $Pr[\bv{\alpha}]$ variables.  
Define $|\script{A}_i|$ as the number of actions in set $\script{A}_i$.  The number of linear constraints specified by \eqref{eq:ce-static} is then: 
\begin{equation} \label{eq:number-constraints-CE} 
\mbox{$\sum_{i=1}^N |\script{A}_i|(|\script{A}_i|-1)$}  
\end{equation} 

\subsection{Coarse correlated equilibrium (CCE)} 

The definition of correlated equilibrium assumes that non-participating players still receive the suggestions from the 
game manager.  As the suggestion $\alpha_i(t)$ for player $i$ may be correlated with the suggestions $\alpha_j(t)$ of other players $j \neq i$, this can give a non-participating player $i$ a great deal of 
information about the likelihood of actions from other players.  
The following simple modification assumes that non-participating players do not receive any suggestions from the game manager.   A probability mass function $Pr[\bv{\alpha}]$ is a \emph{coarse correlated equilibrium (CCE)} if it 
satisfies the constraints \eqref{eq:nash-static}.  Note that the product form constraints \eqref{eq:NE-ind} are not required. 
 This CCE definition was introduced in \cite{CCE}.   
 Similar to the CE case, these CCE constraints imply that no player can increase her average utility by individually deviating from the suggestions of the game manager.

The CCE constraints \eqref{eq:nash-static} 
are linear in the $Pr[\bv{\alpha}]$ values.  The number of CCE constraints is: 
\[ \mbox{$\sum_{i=1}^N |\script{A}_i|$}   \]
This number is typically much less than the number of constraints required for a CE,  specified in \eqref{eq:number-constraints-CE}.  Assuming that $|\script{A}_i| \geq 2$ for each player $i$ (so that each player has at least 2 action options), the number of CCE constraints is always less than or equal to the number of CE constraints, with equality if and only if $|\script{A}_i|=2$ for all players $i$.

\subsection{A superset result} 

The assumption that all sets $\script{A}_i$ are finite make the game a \emph{finite game}. 
Fix a finite game and define $\script{E}_{NE}$, $\script{E}_{CE}$, and $\script{E}_{CCE}$ as the set of all probability mass functions $Pr[\bv{\alpha}]$ that define a (mixed strategy) Nash equilibrium, a correlated equilibrium, and a coarse correlated equilibrium, respectively.  It is known that every such finite game has at least one  mixed strategy Nash equilibrium, and so $\script{E}_{NE}$ is nonempty \cite{nash-n-person-games}\cite{nash-games}.  
Furthermore, it is known that any NE is also a CE, and any CE is also a CCE, so that \cite{aumann-correlated-eq1}\cite{aumann-correlated-eq2}\cite{CCE}: 
\begin{equation} \label{eq:inclusion-static} 
\script{E}_{NE} \subseteq \script{E}_{CE} \subseteq \script{E}_{CCE} 
\end{equation} 
Furthermore, the sets  $\script{E}_{CE}$ and $\script{E}_{CCE}$ are closed, bounded, and convex \cite{aumann-correlated-eq1}\cite{aumann-correlated-eq2}\cite{CCE}.   



\subsection{A simple example} \label{section:example} 

Consider a game where player $1$ has three control options and player $2$ has two control options: 
\[ \script{A}_1 = \{\alpha, \beta, \gamma\} \: \: , \: \: \script{A}_2 = \{\alpha, \beta\} \]
The utility functions $\hat{u}_1(\alpha_1, \alpha_2)$ and $\hat{u}_2(\alpha_1, \alpha_2)$ are specified in the table of Fig. \ref{fig:example-utils}, where player 1 actions are listed by row and player 2 actions are listed by column.  

\begin{figure}[htbp]
   \centering
  \[\begin{array}{ccc}
\mbox{Utility 1} & \mbox{Utility 2} & \mbox{Probabilities} \\
\begin{tabular}{|c|c|c|}
\hline 
& $\alpha$ & $\beta$ \\ \hline 
$\alpha$ & 2 & 5  \\ \hline
$\beta$ & 4 & 2 \\ \hline
$\gamma$ & 3 & 5 \\ \hline
\end{tabular} & 
\begin{tabular}{|c|c|c|}
\hline 
& $\alpha$ & $\beta$ \\ \hline 
$\alpha$ & 50 & 1  \\ \hline
$\beta$ & 2 & 4 \\ \hline
$\gamma$ & 3 & 0 \\ \hline
\end{tabular} & 
\begin{tabular}{|c|c|c|}
\hline 
& $\alpha$ & $\beta$ \\ \hline 
$\alpha$ & $a$ & $b$  \\ \hline
$\beta$ & $c$ & $d$ \\ \hline
$\gamma$ & $e$ & $f$ \\ \hline
\end{tabular}
 \end{array}\] 
   \caption{Example utility functions $\hat{u}_1(\alpha_1, \alpha_2)$ and $\hat{u}_2(\alpha_1, \alpha_2)$.}
   \label{fig:example-utils}
\end{figure}

There are six possible action vectors $(\alpha_1, \alpha_2)$.  Define the mass function $Pr[\bv{\alpha}]$ by values $a$, $b$, $c$, $d$, $e$, $f$ associated with each of the six possibilities, as shown in Fig. \ref{fig:example-utils}. 

The eight CE constraints for this problem are: 
\begin{eqnarray*}
\mbox{player 1 sees $\alpha$:} & 2a + 5b \geq 4a + 2b \\
\mbox{player 1 sees $\alpha$:} & 2a + 5b \geq 3a + 5b \\
\mbox{player 1 sees $\beta$:} & 4c + 2d \geq 2c + 5d \\
\mbox{player 1 sees $\beta$:} & 4c + 2d \geq 3c + 5d \\
\mbox{player 1 sees $\gamma$:} & 3e + 5f \geq 2e + 5f \\
\mbox{player 1 sees $\gamma$:} & 3e + 5f \geq 4e + 2f \\
\mbox{player 2 sees $\alpha$:} & 50a + 2c + 3e \geq a + 4c + 0e \\
\mbox{player 2 sees $\beta$:} &  b + 4d + 0f \geq 50b + 2d + 3f 
\end{eqnarray*}
It can be shown that there is a single probability mass function $Pr[\bv{\alpha}]$ that satisfies all of these CE constraints: 
\[ a=b=0, \: \: c = 0.45, \: \: d = 0.15, \: \: e = 0.3, \: \: f = 0.1 \]
This is also the only NE.  The average utility vector associated with this mass function is
$(\overline{u}_1, \overline{u}_2) = (3.5, 2.4)$. 

In contrast, the five CCE constraints for this problem are: 
\begin{eqnarray*}
\mbox{player 1 chooses $\alpha$:} & \hspace{-.4in}2a + 5b + 4c + 2d + 3e + 5f  \\
&\hspace{+.3in} \geq 2(a+c+e) + 5(b+d+f) \\
\mbox{player 1 chooses  $\beta$:} & \hspace{-.4in}2a + 5b + 4c + 2d + 3e + 5f \\
& \hspace{+.3in} \geq 4(a+c+e) + 2(b+d+f) \\
\mbox{player 1 chooses  $\gamma$:} & \hspace{-.4in}2a + 5b + 4c + 2d + 3e + 5f \\
& \hspace{+.3in} \geq 3(a+c+e) + 5(b+d+f) \\
\mbox{player 2 chooses $\alpha$:} & \hspace{-.7in} 50a + b + 2c + 4d + 3e  \\
&\hspace{+.2in} \geq 50(a+b) + 2(c+d) +3(e+f) \\
\mbox{player 2 chooses $\beta$:} &  \hspace{-.7in} 50a + b + 2c + 4d + 3e  \\
&\hspace{+.2in} \geq 1(a+b) + 4(c+d) + 0(e+f)
\end{eqnarray*}
There are an infinite number of probability mass functions $Pr[\bv{\alpha}]$ that satisfy these CCE constraints.  Three different ones are given in the table of Fig. \ref{fig:pce-solns}, labeled \emph{distribution 1}, \emph{distribution 2}, 
 and \emph{distribution 3}. 
Distribution 1 corresponds to the CE and NE distribution.

The set of all utility vectors $(\overline{u}_1, \overline{u}_2)$ achievable under CCE constraints is the triangular region  shown in Fig. \ref{fig:CCE-compare}.  The three vertices of the triangle correspond to the three distributions in Fig. \ref{fig:pce-solns}, and are: 
\[ (\overline{u}_1, \overline{u}_2) \in \{(3.5, 2.4), (3.5, 9.3),  (3.8773, 3.7914)\}  \]

The point $(3.5, 2.4)$ is the lower left vertex of the triangle and corresponds to the CE (and NE) distribution.  It is clear that both players can significantly increase their utility by changing from CE constraints to CCE constraints.   This illustrates the following general principle:  \emph{All players benefit if non-participants are denied access to the suggestions of the game manager}.  This principle is justified by \eqref{eq:inclusion-static}.

\begin{figure}[htbp]
   \centering
  \[\begin{array}{ccc}
\mbox{Distribution 1} & \mbox{Distribution 2} & \mbox{Distribution 3} \\
\begin{tabular}{|c|c|c|}
\hline 
& $\alpha$ & $\beta$ \\ \hline 
$\alpha$ & 0 & 0  \\ \hline
$\beta$ & .45 & .15 \\ \hline
$\gamma$ & .30 & .10 \\ \hline
\end{tabular} & 
\begin{tabular}{|c|c|c|}
\hline 
& $\alpha$ & $\beta$ \\ \hline 
$\alpha$ & .15 & 0  \\ \hline
$\beta$ & .60 & .15 \\ \hline
$\gamma$ & 0 & .10 \\ \hline
\end{tabular} & 
\begin{tabular}{|c|c|c|}
\hline 
& $\alpha$ & $\beta$ \\ \hline 
$\alpha$ & .0368 & 0  \\ \hline
$\beta$ & .9018 & .0368 \\ \hline
$\gamma$ & 0 & .0245 \\ \hline
\end{tabular}
 \end{array}\] 
   \caption{Three different probability distributions that satisfy the CCE constraints. The first distribution also satisfies the CE and NE constraints.}
   \label{fig:pce-solns}
\end{figure}

\begin{figure}[htbp]
   \centering
   \includegraphics[width=3.8in]{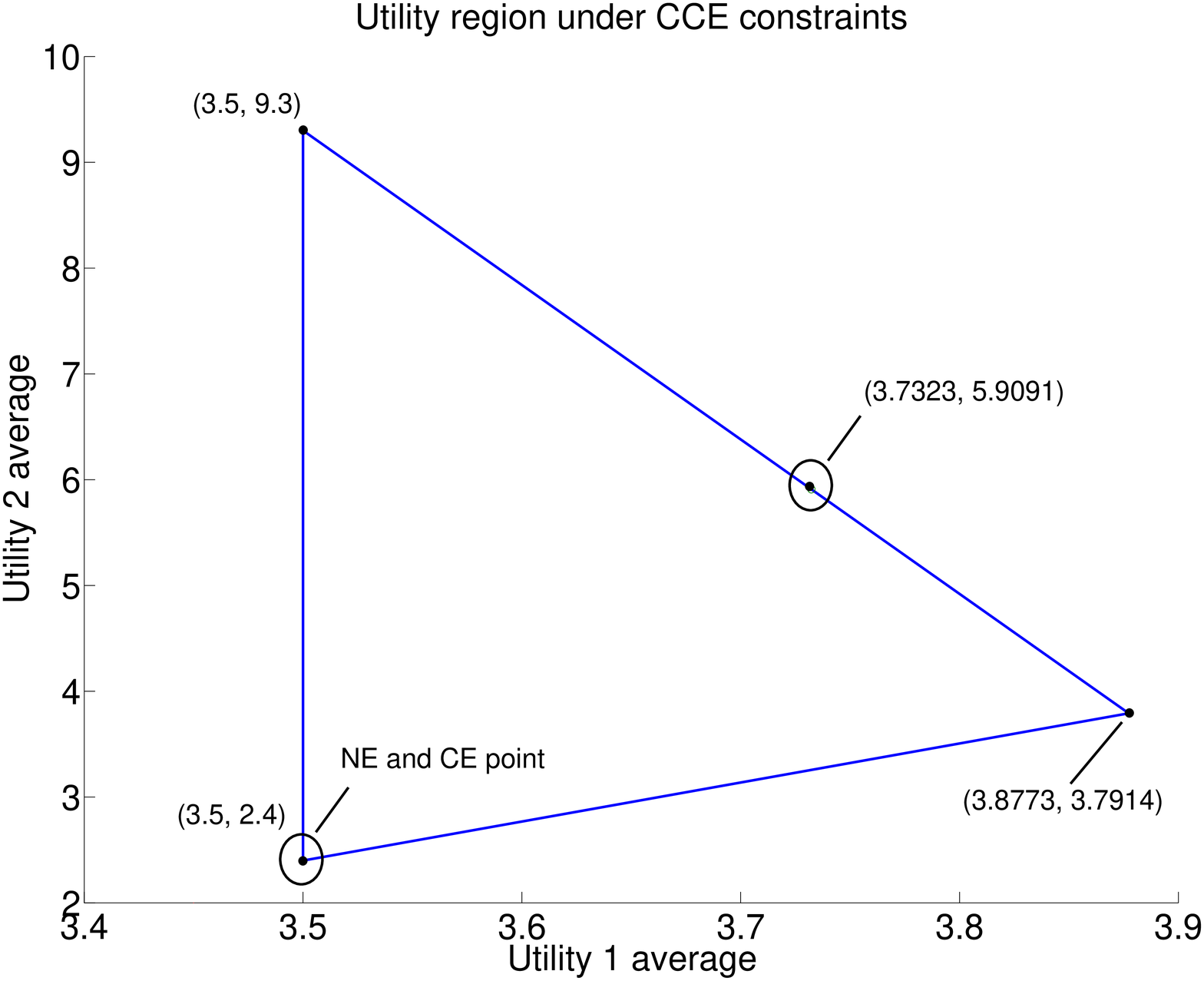} 
   \caption{The region of $(\overline{u}_1, \overline{u}_2)$ values achievable under CCE constraints. All points inside and on the triangle are achievable.  The NE and CCE point is the lower left vertex. The point $(3.7323, 5.9091)$ is the solution to the convex optimization example of Section \ref{section:convex-static}.}
   \label{fig:CCE-compare}
\end{figure}

\subsection{Utility optimization with equilibrium constraints} \label{section:convex-static}

There are typically many  probability distributions $Pr[\bv{\alpha}]$ that satisfy the CCE constraints. The goal is to find one that leads to an optimal vector of average utilities.  Optimality is determined by a concave \emph{fairness function}, as defined below.

For convenience, assume all utility functions are nonnegative.  Define $u_i^{max}$ as an upper bound on the utility for each player $i \in \{1, \ldots, N\}$, so that: 
\[ 0 \leq \hat{u}_i(\bv{\alpha}) \leq u_i^{max} \: \: \: \forall \bv{\alpha} \in \script{A} \]
 Define $\phi(u_1, \ldots, u_N)$ as a continuous and concave function that maps the set $\times_{i=1}^N[0, u_i^{max}]$ to the real numbers.  This is called the \emph{fairness function}.  The game manager chooses a probability mass function $Pr[\bv{\alpha}]$ with the goal of maximizing $\phi(\overline{u}_1, \ldots, \overline{u}_N)$ subject to CCE constraints: 
 
 \begin{eqnarray}
 \mbox{Maximize:} & \phi(\overline{u}_1, \ldots, \overline{u}_N) \label{eq:ps1} \\
 \mbox{Subject to:} & \overline{u}_i = \sum_{\bv{\alpha}\in\script{A}} Pr[\bv{\alpha}]\hat{u}_i(\bv{\alpha}) \: \: \forall i \in \{1, \ldots, N\} \label{eq:ps2} \\
 & Pr[\bv{\alpha}] \geq 0 \: \: \forall \bv{\alpha} \in \script{A} \label{eq:ps3} \\
 & \sum_{\bv{\alpha}\in\script{A}} Pr[\bv{\alpha}] = 1 \label{eq:ps4} \\
 & \mbox{CCE constraints \eqref{eq:nash-static} are satisfied} \label{eq:ps5} 
 \end{eqnarray}
 
 The above is a convex optimization problem.    If the CCE constraints are replaced by the CE constraints \eqref{eq:ce-static}, the problem remains convex but can have significantly more constraints. If the CCE constraints are replaced with the NE constraints \eqref{eq:NE-ind}-\eqref{eq:nash-static}, the problem becomes nonconvex. 
 
 Consider the special case example of Section \ref{section:example} with fairness function given by: 
  \[ \phi(\overline{u}_1, \overline{u}_2) = 10\log(1 + \overline{u}_1) + \log(1 + \overline{u}_2) \]
  where player 1 is given a higher priority.  The optimal utility is $(\overline{u}_1^*, \overline{u}_2^*) = (3.7323, 5.9091)$, plotted in Fig. \ref{fig:CCE-compare}. 
 
  
\section{Stochastic games} \label{section:stochastic-games} 

Let $\bv{\omega}(t) = (\omega_0(t), \omega_1(t), \ldots, \omega_N(t))$ be a vector of random events for 
slot $t \in \{0, 1, 2, \ldots\}$.  Each component $\omega_i(t)$ takes values in some finite set $\Omega_i$, 
for $i \in \{0, 1, \ldots, N\}$. 
Define 
$\Omega = \Omega_0 \times \Omega_1 \times \cdots \times \Omega_N$. 
The vector process $\bv{\omega}(t)$ is assumed to be independent and identically distributed (i.i.d.) over slots with probability 
mass function: 
\[ \pi[\bv{\omega}] \defequiv Pr[\bv{\omega}(t) = \bv{\omega}] \: \: \: \forall \bv{\omega} \in \Omega \]
where the notation ``$\defequiv$'' means ``defined to be equal to.''
On each slot $t$, 
the components of the vector $\bv{\omega}(t)$ can be arbitrarily correlated. 

At the beginning of each slot $t$, each player $i \in \{1, \ldots, N\}$ observes its own random event $\omega_i(t)$.  The game manager observes the full vector $\bv{\omega}(t)$, including the additional information $\omega_0(t)$.  It then sends a suggested action $M_i(t)$ to each participating player $i \in \{1, \ldots, N\}$.  Assume $M_i(t) \in \script{A}_i$, where $\script{A}_i$ is the finite set of actions available to player $i$. 
Each player $i$ chooses an action $\alpha_i(t) \in \script{A}_i$.  Participating players always choose $\alpha_i(t) = M_i(t)$.  Non-participating players do not receive $M_i(t)$ and choose $\alpha_i(t)$ using knowledge of only
 $\omega_i(t)$ and of events that occurred before slot $t$. 

Let $\bv{\alpha}(t) = (\alpha_1(t), \ldots, \alpha_N(t))$ be the action vector.  
The utility $u_i(t)$ earned by each player $i$ on slot $t$ is a function of $\bv{\alpha}(t)$ and $\bv{\omega}(t)$: 
\[ u_i(t) = \hat{u}_i(\bv{\alpha}(t), \bv{\omega}(t)) \]
For convenience, assume utility functions are nonnegative with maximum values $u_i^{max}$ for $i \in \{1, \ldots, N\}$, 
so that: 
\[ 0 \leq \hat{u}_i(\bv{\alpha}(t), \bv{\omega}(t)) \leq u_i^{max} \]

\subsection{Discussion of game structures} 

This model can be used to treat various game structures.   For example, the scenario where all players have full information can be treated by defining $\omega_i(t) = \omega_0(t)$ for all $i \in \{1, \ldots, N\}$.  This is useful in games related to economic markets, 
where  $\omega_0(t)$ can represent a commonly known vector of current prices.   Alternatively, one can imagine a game 
with a single random event process $\omega_0(t)$ that is known to the game manager but unknown to all players. For example, 
consider a game defined over a wireless multiple access system.  Wireless users are players in the game, and the access point is the game manager.  In this example, $\omega_0(t)$ can represent a vector of current channel conditions known only to the access point.   Such games can be treated by setting $\omega_i(t)$ to a default constant value for all  $i \in \{1, \ldots, N\}$ 
 and all slots $t$.

 \subsection{Pure strategies and the virtual static game} 
 
 Assume all players participate, so that $M_i(t) = \alpha_i(t)$ for all $i$.  For each $i \in \{1, \ldots, N\}$, denote the sizes of sets $\Omega_i$ and $\script{A}_i$ by $|\Omega_i|$ and $|\script{A}_i|$, respectively. Define a \emph{pure strategy function for player $i$} as a function $b_i(\omega_i)$ that maps $\Omega_i$ to the set $\script{A}_i$.   There are $|\script{A}_i|^{|\Omega_i|}$ such functions.  Define: 
 \[ \script{S}_i \defequiv \{1, 2, \ldots, |\script{A}_i|^{|\Omega_i|}\} \]
Enumerate the pure strategy functions for player $i$ and represent them by  $b_i^{(s)}(\omega_i)$ for $s \in\script{S}_i$.  
 Define: 
 \[ \script{S} \defequiv \script{S}_1 \times \script{S}_2 \times \cdots \times \script{S}_N \]
 Each vector $(s_1, s_2, \ldots, s_N) \in \script{S}$ can be used to specify a profile of pure strategies used by each player.  
 For each  $\bv{s} \in \script{S}$ and each  $\bv{\omega} \in \Omega$,  define: 
 \begin{equation} \label{eq:structure-b-functions} 
 \bv{b}^{(\bv{s})}(\bv{\omega}) = (b_1^{(s_1)}(\omega_1), b_2^{(s_2)}(\omega_2), \ldots, b_N^{(s_N)}(\omega_N)) 
 \end{equation} 
In the special case when the action of each player $i$ on slot $t$ is defined by pure strategy 
 $s_i$, the action vector is $(\alpha_1(t), \ldots, \alpha_N(t))=\bv{b}^{(\bv{s})}(\bv{\omega}(t))$. 
 The average utility earned by player $i$  on such a slot $t$  is defined: 
 \begin{equation} \label{eq:hi} 
  \mbox{$h_i(\bv{s}) \defequiv \sum_{\bv{\omega}\in\Omega} \pi[\bv{\omega}]\hat{u}_i(\bv{b}^{(\bv{s})}(\bv{\omega}), \bv{\omega})$}
  \end{equation}

 The stochastic game can be treated as a \emph{virtual static game} as follows:  The virtual static game also has $N$ players.  The \emph{virtual action space} of each player $i$ is viewed as the set of pure strategies $\script{S}_i$.  Every slot $t$, each player $i$ selects a pure strategy $s_i(t) \in  \script{S}_i$.   The \emph{virtual utility functions} are given by the functions $h_i(\bv{s})$.   
 
 The virtual static game is still a finite game.  Hence, the NE, CE, and CCE definitions for static games can be used here.  In particular, let $Pr[\bv{s}]$ be a probability mass function over the finite set of strategy profiles $\bv{s} \in \script{S}$.   Then: 
 
 \begin{itemize} 
 \item (NE for virtual static game)  $Pr[\bv{s}]$ is a NE for the virtual static game if it has the product form: 
 \begin{equation} \label{eq:product-form-virtual} 
  Pr[\bv{s}] = \prod_{i=1}^N g_i[s_i] \: \: \: \: \forall \bv{s} \in \script{S}  
  \end{equation} 
 where $g_i[s_i] = Pr[s_i(t) = s_i]$, and if: 
 \begin{eqnarray} 
 \sum_{\bv{s} \in \script{S}} Pr[\bv{s}]h_i(\bv{s}) \geq \sum_{\bv{s}\in\script{S}} Pr[\bv{s}]h_i(r_i, \bv{s}_{\overline{i}}) \nonumber \\   
 \forall i \in \{1, \ldots, N\} , \forall r_i \in \script{S}_i  \label{eq:NE-virtual} 
 \end{eqnarray}
 
 \item (CE for virtual static game) $Pr[\bv{s}]$ is a CE for the virtual static game if: 
  \begin{eqnarray} 
  \sum_{\bv{s}_{\overline{i}}\in \script{S}_{\overline{i}}} Pr[s_i, \bv{s}_{\overline{i}}]h_i(s_i, \bv{s}_{\overline{i}}) \geq
  \sum_{\bv{s}_{\overline{i}} \in \script{S}_{\overline{i}}} Pr[s_i, \bv{s}_{\overline{i}}]h_i(r_i, \bv{s}_{\overline{i}}) \nonumber \\
  \forall i \in \{1, \ldots, N\}, \forall s_i, r_i \in \script{S}_i \label{eq:CE-virtual} 
  \end{eqnarray}
  
  \item (CCE for virtual static game) $Pr[\bv{s}]$ is a CCE for the virtual static game if it satisfies \eqref{eq:NE-virtual}. 
 \end{itemize} 
 
 A given probability mass function $Pr[\bv{s}]$ defined over $\bv{s} \in \script{S}$ generates a 
 conditional probability mass function $Pr[\bv{\alpha}|\bv{\omega}]$ defined over all $\bv{\alpha} \in \script{A}$ and $\bv{\omega} \in \Omega$: 
 \begin{equation} \label{eq:generate} 
 \mbox{$Pr[\bv{\alpha}|\bv{\omega}] = \sum_{\bv{s} \in \script{S}} Pr[\bv{s}]1\{\bv{b}^{(\bv{s})}(\bv{\omega})=\bv{\alpha}\}$}
 \end{equation} 
 where $1\{\bv{b}^{(\bv{s})}(\bv{\omega})=\bv{\alpha}\}$ is an indicator function that is 1 if $\bv{b}^{(\bv{s})}(\bv{\omega})=\bv{\alpha}$, and is 0 else.    
 However, not all 
 $Pr[\bv{\alpha}|\bv{\omega}]$ functions can be generated in this way.\footnote{The conference version of this paper \cite{repeated-games-allerton} contained an incorrect statement suggesting that all $Pr[\bv{\alpha}|\bv{\omega}]$ distributions can be generated by $Pr[\bv{s}]$ distributions according to \eqref{eq:generate} (Lemma 1 from page 5 of \cite{repeated-games-allerton}).  While this is true in the case when   $\bv{b}^{(\bv{s})}(\bv{\omega})$ can be an arbitrary function of the full $\bv{\omega}$ vector, it  
 does \emph{not} hold for  strategy functions with the structure \eqref{eq:structure-b-functions}. The author regrets the misleading statement in \cite{repeated-games-allerton}.  Fortunately, that incorrect statement was never used, and so  
  it did not affect any of the results in \cite{repeated-games-allerton}.}
 
 For example, the right-hand-side of \eqref{eq:generate} does not depend on $\omega_0$.  In contrast, a game manager might want to select $Pr[\bv{\alpha}|\bv{\omega}]$ as a function of the full random event vector $\bv{\omega} = (\omega_0, \omega_1, \ldots, \omega_N)$.  
 
 If \eqref{eq:generate} holds, 
a game manager with no knowledge of the random event vector $\bv{\omega}(t)$ could 
produce suggestions according to  $Pr[\bv{\alpha}|\bv{\omega}]$ by randomly selecting a strategy vector $\bv{s} = (s_1, \ldots, s_N)$ with probability $Pr[\bv{s}]$, and then broadcasting component $s_i$ to each player $i$.  Thus, the NE, CE, and CCE conditions in \eqref{eq:product-form-virtual}-\eqref{eq:CE-virtual} for the virtual static game 
can be viewed as \emph{information restricted (IR)} notions of equilibrium for the stochastic game.   Formally,
define a probability mass function $Pr[\bv{s}]$ to be an IR-NE if it satisfies \eqref{eq:product-form-virtual}-\eqref{eq:NE-virtual}, an IR-CE if it satisfies \eqref{eq:CE-virtual}, and an IR-CCE if it satisfies \eqref{eq:NE-virtual}.

  \begin{lem} \label{lem:product-form} 
 Suppose $Pr[\bv{s}]$ and $Pr[\bv{\alpha}|\bv{\omega}]$ satisfy \eqref{eq:generate}.  If $Pr[\bv{s}]$ has the product
 form \eqref{eq:product-form-virtual}, then $Pr[\bv{\alpha}|\bv{\omega}]$ has the following product form: 
 \begin{eqnarray}
 Pr[\bv{\alpha}|\bv{\omega}] &=& \prod_{i=1}^NPr[\alpha_i|\omega_i] \: \: \forall \bv{\omega} \in \Omega, \forall \bv{\alpha} \in \script{A}  \label{eq:product-form-conditional} 
 \end{eqnarray}
 \end{lem}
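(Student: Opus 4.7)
The plan is to unwind the definitions on the right-hand side of \eqref{eq:generate} and exploit the coordinatewise structure \eqref{eq:structure-b-functions}. The first step is to observe that since $\bv{b}^{(\bv{s})}(\bv{\omega})$ is formed coordinatewise, the indicator factors as
\[ 1\{\bv{b}^{(\bv{s})}(\bv{\omega}) = \bv{\alpha}\} = \prod_{i=1}^N 1\{b_i^{(s_i)}(\omega_i) = \alpha_i\}. \]
Combined with the product form \eqref{eq:product-form-virtual} for $Pr[\bv{s}]$ and the fact that $\script{S} = \script{S}_1 \times \cdots \times \script{S}_N$, the sum over $\bv{s}$ in \eqref{eq:generate} splits as a product of one-dimensional sums:
\[ Pr[\bv{\alpha}|\bv{\omega}] = \prod_{i=1}^N f_i(\alpha_i, \omega_i), \qquad f_i(\alpha_i, \omega_i) \defequiv \sum_{s_i \in \script{S}_i} g_i[s_i]\,1\{b_i^{(s_i)}(\omega_i) = \alpha_i\}. \]

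The second step is to identify $f_i(\alpha_i, \omega_i)$ with the marginal conditional probability $Pr[\alpha_i|\omega_i]$. For this I would first note that $f_i$ is itself a probability distribution on $\alpha_i$: summing over $\alpha_i \in \script{A}_i$ collapses the indicator since $b_i^{(s_i)}(\omega_i)$ takes exactly one value, leaving $\sum_{s_i} g_i[s_i] = 1$. Hence for any fixed $\bv{\omega}$, marginalizing $Pr[\bv{\alpha}|\bv{\omega}]$ over the other coordinates $\bv{\alpha}_{\overline{i}}$ yields $f_i(\alpha_i, \omega_i)$.

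The third step is to compute $Pr[\alpha_i|\omega_i]$ from the joint model and check that it coincides with $f_i$. Using the i.i.d.\ law $\pi[\bv{\omega}]$ for the random events and the conditional law $Pr[\bv{\alpha}|\bv{\omega}]$ for the actions,
\[ Pr[\alpha_i=a,\omega_i=w] = \sum_{\bv{\omega}:\omega_i=w}\pi[\bv{\omega}]\,f_i(a,w) = f_i(a,w)\,Pr[\omega_i=w], \]
so dividing by $Pr[\omega_i=w]$ gives $Pr[\alpha_i=a\,|\,\omega_i=w] = f_i(a,w)$. Substituting back into the product form yields \eqref{eq:product-form-conditional}.

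This proof is essentially a bookkeeping exercise; the only subtlety is recognizing that the coordinatewise structure \eqref{eq:structure-b-functions} (which, as the footnote emphasizes, is strictly weaker than letting $\bv{b}^{(\bv{s})}(\bv{\omega})$ depend on the full $\bv{\omega}$) is exactly what makes the indicator factor so the sum over $\bv{s}$ decouples. No deeper obstacle is anticipated.
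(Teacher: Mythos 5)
Your proposal is correct and follows essentially the same route as the paper's proof: factor the indicator using the coordinatewise structure \eqref{eq:structure-b-functions}, decouple the sum over $\script{S} = \script{S}_1 \times \cdots \times \script{S}_N$ into a product of one-dimensional sums, and identify each factor with $Pr[\alpha_i|\omega_i]$. The only difference is that you justify the final identification via an explicit marginalization, whereas the paper simply asserts it; that extra care is harmless and fills in a step the paper leaves implicit.
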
 
 
 \begin{proof} 
It follows by \eqref{eq:generate} that: 
\begin{eqnarray*}
Pr[\bv{\alpha}|\bv{\omega}] &=& 
\sum_{\bv{s}\in\script{S}}\prod_{i=1}^Ng_i[s_i]1\{\bv{b}^{(\bv{s})}(\bv{\omega}) = \bv{\alpha}\} \\
&=& \sum_{\bv{s}\in\script{S}} \prod_{i=1}^Ng_i[s_i]1\{b_i^{(s_i)}(\omega_i) = \alpha_i\} \\
&=&\left(\sum_{s_1\in\script{S}_1}  g_1[s_1]1\{b_1^{(s_1)}(\omega_1) = \alpha_1\}\right)\\
&& \: \: \: \: \cdots \left(\sum_{s_N\in\script{S}_N}  g_N[s_N]1\{b_N^{(s_N)}(\omega_N) = \alpha_N\}\right) \\
&=&Pr[\alpha_1|\omega_1] \cdots Pr[\alpha_N|\omega_N]
\end{eqnarray*} 
\end{proof} 

\subsection{General equilibrium for the stochastic game}

 Let $Pr[\bv{\alpha}|\bv{\omega}]$ be a conditional probability mass function defined over $\bv{\omega} \in \Omega$, $\bv{\alpha} \in \script{A}$.  It is assumed throughout that: 
 \begin{eqnarray} 
 \mbox{$Pr[\bv{\alpha}|\bv{\omega}] \geq 0$} &  \forall \bv{\alpha}\in\script{A}, \forall \bv{\omega} \in \Omega \label{eq:dist1} \\
 \mbox{$\sum_{\bv{\alpha}\in\script{A}}Pr[\bv{\alpha}|\bv{\omega}] = 1$} & \forall \bv{\omega} \in \Omega \label{eq:dist2} 
 \end{eqnarray} 
   

General equilibria for the \emph{stochastic game} can be defined in terms of  $Pr[\bv{\alpha}|\bv{\omega}]$. 
The conference version of this paper  \cite{repeated-games-allerton} does this by specifying constraints for each pure strategy $r_i \in \script{S}_i$, similar to the 
virtual static game constraints \eqref{eq:NE-virtual} and \eqref{eq:CE-virtual}. Unfortunately, this requires a number of constraints that is exponential in the size of the sets $\Omega_i$.  The following alternative definition is equivalent 
to that given in \cite{repeated-games-allerton}, 
yet uses only a polynomial number of constraints.  For the case of NE and CCE, it 
does so by introducing additional variables $\theta_i(v_i)$ for each $i \in \{1, \ldots, N\}$ and each $v_i \in \Omega_i$.   
Intuitively, $\theta_i(v_i)$ represents the  largest conditional expected utility achievable by player $i$, given that she does not participate and that she observes $\omega_i(t) = v_i$.   

\begin{itemize} 
\item (NE for the stochastic game) $Pr[\bv{\alpha}|\bv{\omega}]$ is a NE for the stochastic game if it has the product form
\eqref{eq:product-form-conditional} and if there are real numbers $\theta_i(v_i) \in [0, u_i^{max}]$ such that: 
\begin{align}
&\sum_{\bv{\omega} \in \Omega}\sum_{\bv{\alpha}\in\script{A}} \pi[\bv{\omega}] Pr[\bv{\alpha}|\bv{\omega}]\hat{u}_i(\bv{\alpha}, \bv{\omega})  \nonumber \\
&\geq \sum_{\bv{\omega} \in \Omega}\sum_{\bv{\alpha}\in\script{A}} \pi[\bv{\omega}] Pr[\bv{\alpha}|\bv{\omega}] \theta_i(\omega_i) \: \: \forall i \in \{1, \ldots, N\}  \label{eq:NE-better1} 
\end{align} 
and 
\begin{align} 
 &\sum_{\bv{\omega} \in \Omega | \omega_i = v_i}\sum_{\bv{\alpha}\in\script{A}} \pi[\bv{\omega}] Pr[\bv{\alpha}|\bv{\omega}] 
\theta_i(v_i) \nonumber \\
& \geq \sum_{\bv{\omega}\in\Omega| \omega_i=v_i} \sum_{\bv{\alpha}\in\script{A}} \pi[\bv{\omega}]Pr[\bv{\alpha}|\bv{\omega}]   \hat{u}_i((\beta_i, \bv{\alpha}_{\overline{i}}), \bv{\omega}) \nonumber \\
 &\hspace{+.3in} \forall i \in \{1, \ldots, N\} , \forall v_i \in \Omega_i, \forall \beta_i \in \script{A}_i \label{eq:NE-better2} 
\end{align}

\item (CE for the stochastic game) $Pr[\bv{\alpha}|\bv{\omega}]$ is a CE for the stochastic game if there are real numbers 
$\theta_i(v_i, c_i) \in [0, u_i^{max}]$ such that: 
\begin{align}
&\sum_{\bv{\omega} \in \Omega}\sum_{\bv{\alpha}\in\script{A}} \pi[\bv{\omega}]Pr[\bv{\alpha}|\bv{\omega}][\hat{u}_i(\bv{\alpha}, \bv{\omega}) - \theta_i(\omega_i, \alpha_i)]  \geq 0 \nonumber \\
& \: \: \:\:\: \forall i \in \{1, \ldots, N\} \label{eq:CE-better1} 
\end{align} 
and
\begin{align} 
&\sum_{\bv{\omega} \in \Omega | \omega_i = v_i}\sum_{\bv{\alpha}\in\script{A}| \alpha_i=c_i} \pi[\bv{\omega}]Pr[\bv{\alpha}|\bv{\omega}]\theta_i(v_i,c_i) \nonumber \\
&\geq  \sum_{\bv{\omega} \in \Omega | \omega_i = v_i}\sum_{\bv{\alpha}\in\script{A}| \alpha_i=c_i}\pi[\bv{\omega}]Pr[\bv{\alpha}|\bv{\omega}] \hat{u}_i\left((\beta_i, \bv{\alpha}_{\overline{i}}), \bv{\omega}\right)  \nonumber \\
 & \: \: \: \: \forall i \in \{1, \ldots, N\} , \forall v_i \in \Omega_i, \forall c_i \in \script{A}_i, \forall \beta_i \in \script{A}_i \label{eq:CE-better2} 
 \end{align}
 
\item (CCE for the stochastic game) $Pr[\bv{\alpha}|\bv{\omega}]$ is a CCE for the stochastic game if there are real numbers 
 $\theta_i(v_i) \in [0, u_i^{max}]$ such that the constraints 
 \eqref{eq:NE-better1}-\eqref{eq:NE-better2} are satisfied. 
\end{itemize} 

The next lemma shows that every information restricted equilibrium $Pr[\bv{s}]$ 
 generates a general equilibrium $Pr[\bv{\alpha}|\bv{s}]$. 

\begin{lem} \label{lem:equiv-virtual} Suppose $Pr[\bv{s}]$ and $Pr[\bv{\alpha}|\bv{\omega}]$ satisfy \eqref{eq:generate}.  Then: 

(a)  $Pr[\bv{s}]$ satisfies the 
constraints \eqref{eq:NE-virtual}  if and only if $Pr[\bv{\alpha}|\bv{\omega}]$ satisfies the constraints
\eqref{eq:NE-better1}-\eqref{eq:NE-better2}. 

(b) $Pr[\bv{s}]$ satisfies the constraints \eqref{eq:CE-virtual}  if and only if $Pr[\bv{\alpha}|\bv{\omega}]$ satisfies the constraints \eqref{eq:CE-better1}-\eqref{eq:CE-better2}. 

(c) If $Pr[\bv{s}]$ is a NE for the virtual static game, then $Pr[\bv{\alpha}|\bv{\omega}]$ is a NE for the stochastic game. 
\end{lem}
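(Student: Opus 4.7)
The plan is to unfold the virtual-game expressions using the definition of $h_i$ in \eqref{eq:hi} together with the generating relation \eqref{eq:generate}, and then to match them against the stochastic-game constraints by taking the auxiliary variables $\theta_i$ to be the natural per-observation (or per-observation-and-suggestion) best-response values. The central translation identity, obtained by substituting \eqref{eq:generate} into the sum $\sum_{\bv{s}_{\overline{i}}} Pr[s_i,\bv{s}_{\overline{i}}] h_i(r_i,\bv{s}_{\overline{i}})$ and collapsing the $\bv{s}_{\overline{i}}$ marginal, is
\[ \sum_{\bv{s}_{\overline{i}}} Pr[s_i,\bv{s}_{\overline{i}}] h_i(r_i,\bv{s}_{\overline{i}}) = \sum_{\bv{\omega}} \pi[\bv{\omega}] \sum_{\bv{\alpha}: \alpha_i = b_i^{(s_i)}(\omega_i)} Pr[\bv{\alpha}|\bv{\omega}] \hat{u}_i\bigl((b_i^{(r_i)}(\omega_i), \bv{\alpha}_{\overline{i}}),\bv{\omega}\bigr), \]
with the unconditioned version (dropping the restriction $\alpha_i = b_i^{(s_i)}(\omega_i)$) valid when no conditioning on $s_i$ is present, as in the NE case. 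This identity immediately exhibits the right-hand sides of \eqref{eq:NE-virtual} and \eqref{eq:CE-virtual} in the same form that appears on the right-hand sides of \eqref{eq:NE-better2} and \eqref{eq:CE-better2}.

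For part (a), define
\[ \theta_i(v_i) = \frac{\max_{\beta_i \in \script{A}_i} \sum_{\bv{\omega}: \omega_i=v_i}\sum_{\bv{\alpha}} \pi[\bv{\omega}] Pr[\bv{\alpha}|\bv{\omega}] \hat{u}_i\bigl((\beta_i, \bv{\alpha}_{\overline{i}}),\bv{\omega}\bigr)}{\Pr[\omega_i=v_i]} \]
(and $\theta_i(v_i)=0$ whenever $\Pr[\omega_i=v_i]=0$). Then $\theta_i(v_i)\in[0,u_i^{max}]$ and \eqref{eq:NE-better2} holds by construction. The forward direction uses the key fact that $\script{S}_i$ enumerates \emph{all} functions $\Omega_i\to\script{A}_i$, so that $\max_{r_i\in\script{S}_i}$ of the deviation utility factorizes as $\sum_{v_i}\max_{\beta_i}$ over independent coordinate choices $b_i^{(r_i)}(v_i) = \beta_i$; this max equals $\sum_{v_i}\Pr[\omega_i=v_i]\theta_i(v_i)$, and substituting it into \eqref{eq:NE-virtual} yields \eqref{eq:NE-better1}. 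For the converse, given any $r_i\in\script{S}_i$ apply \eqref{eq:NE-better2} with $\beta_i = b_i^{(r_i)}(v_i)$ at each $v_i$, sum over $v_i$, and combine with \eqref{eq:NE-better1} and the translation identity to recover \eqref{eq:NE-virtual}.

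Part (b) is the same argument refined by conditioning additionally on $\alpha_i = c_i$: take $\theta_i(v_i,c_i)$ equal to the max-over-$\beta_i$ of the normalized right-hand side of \eqref{eq:CE-better2}. Because the pair $(c_i,\beta_i) = (b_i^{(s_i)}(v_i), b_i^{(r_i)}(v_i))$ ranges independently at each $v_i$ as $(s_i,r_i)$ varies over $\script{S}_i\times\script{S}_i$, the coordinatewise-max decomposition goes through again and converts \eqref{eq:CE-virtual} into \eqref{eq:CE-better1}-\eqref{eq:CE-better2}. Part (c) is then immediate: the virtual-NE hypothesis supplies both the product form \eqref{eq:product-form-virtual} and the inequalities \eqref{eq:NE-virtual}; Lemma \ref{lem:product-form} transports the product form to \eqref{eq:product-form-conditional}, and part (a) transports the inequalities to \eqref{eq:NE-better1}-\eqref{eq:NE-better2}.

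The main obstacle will be the coordinatewise decomposition of the max, namely showing that $\max_{r_i\in\script{S}_i}$ of a sum over $\bv{\omega}$ factors as $\sum_{v_i}\max_{\beta_i\in\script{A}_i}$ of an integrand in which $\beta_i$ plays the role of $b_i^{(r_i)}(v_i)$. This rests on the defining property that $\script{S}_i = \script{A}_i^{\Omega_i}$, so that the strategy value at one $v_i$ is unconstrained by the values at the other observations; once this decomposition is in place, the remainder reduces to algebraic substitution using \eqref{eq:generate} and the marginalization $\sum_{\alpha_i} Pr[(\alpha_i,\bv{\alpha}_{\overline{i}})|\bv{\omega}] = \sum_{\bv{s}_{\overline{i}}} Pr[\bv{s}_{\overline{i}}] \prod_{j\neq i} 1\{b_j^{(s_j)}(\omega_j)=\alpha_j\}$.
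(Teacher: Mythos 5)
Your treatment of parts (a) and (c) is correct and follows essentially the same route as the paper: your unconditioned translation identity is the paper's identity \eqref{eq:useful-id2}, your normalized-max definition of $\theta_i(v_i)$ is the paper's construction, and the exchange $\max_{r_i\in\script{S}_i}\sum_{v_i}(\cdots)=\sum_{v_i}\max_{\beta_i\in\script{A}_i}(\cdots)$, justified by $\script{S}_i$ enumerating all of $\script{A}_i^{\Omega_i}$, is exactly how the paper passes between \eqref{eq:NE-virtual} and \eqref{eq:NE-better1}--\eqref{eq:NE-better2} (the paper phrases it by explicitly constructing the argmax function $b_i(v_i)$ and noting it equals $b_i^{(r_i)}$ for some $r_i\in\script{S}_i$). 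Part (c) is assembled from Lemma \ref{lem:product-form} and part (a) exactly as in the paper.

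Part (b) has a genuine gap. Your ``central translation identity'' in its conditioned form is false: the left side $\sum_{\bv{s}_{\overline{i}}}Pr[s_i,\bv{s}_{\overline{i}}]h_i(r_i,\bv{s}_{\overline{i}})$ fixes the \emph{entire} pure strategy $s_i$ of player $i$, whereas the right side, after substituting \eqref{eq:generate}, sums over all profiles $\bv{s}'$ whose $i$-th component merely \emph{agrees with $s_i$ at the realized observation}, i.e.\ satisfies $b_i^{(s_i')}(\omega_i)=b_i^{(s_i)}(\omega_i)$. The two sides differ (even in total mass) whenever two distinct pure strategies of player $i$ carrying positive probability prescribe the same action at some $v_i$. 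Consequently the event $\{\omega_i=v_i,\alpha_i=c_i\}$ underlying \eqref{eq:CE-better2} is a strictly coarser conditioning than the event $\{s_i(t)=s_i\}$ underlying \eqref{eq:CE-virtual}, and the claimed ``immediate'' matching of right-hand sides does not hold. For the direction \eqref{eq:CE-virtual} $\Rightarrow$ \eqref{eq:CE-better1}--\eqref{eq:CE-better2} the repair is to map each deviation rule $X_i(v_i,c_i)$ to a \emph{family} of pure-strategy deviations $r_i(s_i)$ defined coordinatewise by $b_i^{(r_i(s_i))}(v_i)=X_i(v_i,b_i^{(s_i)}(v_i))$ and then sum the constraints \eqref{eq:CE-virtual} over $s_i$, using \eqref{eq:generate} only at the aggregated level. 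The reverse direction requires showing that the best $(\omega_i,\alpha_i)$-measurable deviation already dominates the best $(\omega_i,s_i)$-measurable one, which your sketch does not address; the paper omits the proof of (b) as ``similar,'' so there is nothing to compare against, but as written your bridge identity is precisely the step that fails.
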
 

\begin{proof} 
See Appendix A. 
\end{proof}

One may wonder if the constraints \eqref{eq:NE-better1}-\eqref{eq:NE-better2} can be stated more simply by removing the 
$\theta_i(v_i)$ variables.  Indeed, one may wonder if \eqref{eq:NE-better1}-\eqref{eq:NE-better2} are equivalent to: 
\begin{align*} 
&\sum_{\bv{\omega} \in \Omega|\omega_i=v_i}\sum_{\bv{\alpha} \in \script{A}} \pi[\bv{\omega}]Pr[\bv{\alpha}|\bv{\omega}]\hat{u}_i(\bv{\alpha}, \bv{\omega})  \\
&\geq \sum_{\bv{\omega} \in \Omega|\omega_i=v_i}\sum_{\bv{\alpha} \in \script{A}} \pi[\bv{\omega}]Pr[\bv{\alpha}|\bv{\omega}]\hat{u}_i\left((\beta_i, \bv{\alpha}_{\overline{i}}), \bv{\omega}\right) 
\end{align*}
for all $i\in \{1, \ldots, N\}$, $v_i \in \Omega_i$, $\beta_i \in \script{A}_i$.  This is not generally the case. Indeed, the above constraints are more restrictive and imply that the conditional expected utility of player $i$, given she observes $\omega_i(t) = v_i$,  
is greater than or equal to the conditional expectation this player could achieve given $\omega_i(t) = v_i$ and given that she does not participate.  On the other hand, the constraints \eqref{eq:NE-better1}-\eqref{eq:NE-better2} allow a violation of this property for a given $v_i$. Such a violation does not imply that player $i$ could improve beyond the utility associated with participating. 
That is because that act of not participating may itself decrease the achievable average utility in certain $\omega_i$ states by an amount that cannot be recovered by changing strategies on other $\omega_i$ states.  This is a subtlety that does not arise in the static game context without the $\omega_i(t)$ processes.

Define $\script{E}_{NE}^{stoc}$, $\script{E}_{CE}^{stoc}$, $\script{E}_{CCE}^{stoc}$ as the set of all conditional probability 
mass functions $Pr[\bv{\alpha}|\bv{\omega}]$ that are NE, CE, and CCE, respectively, for the stochastic game. 

\begin{lem} For a general stochastic game as defined above: 

(a) The set $\script{E}_{NE}^{stoc}$ is nonempty. 

(b) $\script{E}_{NE}^{stoc} \subseteq \script{E}_{CE}^{stoc} \subseteq \script{E}_{CCE}^{stoc}$. 

(c) Sets $\script{E}_{CE}^{stoc}$ and $\script{E}_{CCE}^{stoc}$ are closed, bounded, and convex. 
\end{lem}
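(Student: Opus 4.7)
The plan is to handle the three parts by reducing to the virtual static game of Section \ref{section:stochastic-games} and exploiting the linearity of the defining constraints. For part (a), I would invoke the classical Nash existence theorem on the finite virtual static game, which guarantees a product-form distribution $Pr[\bv{s}] = \prod_i g_i[s_i]$ satisfying \eqref{eq:NE-virtual}. The conditional distribution $Pr[\bv{\alpha}|\bv{\omega}]$ obtained from it via \eqref{eq:generate} is then a NE for the stochastic game by Lemma \ref{lem:equiv-virtual}(c), so $\script{E}_{NE}^{stoc}$ is nonempty.

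For part (b), both inclusions reduce to constructing slack variables $\theta$ for the weaker notion from those available for the stronger one. For $\script{E}_{NE}^{stoc} \subseteq \script{E}_{CE}^{stoc}$, I would take a NE distribution witnessed by $\theta_i(v_i)$ and define $\theta_i(v_i,c_i) := \theta_i(v_i)$ for every $c_i$. Under the product form \eqref{eq:product-form-conditional}, conditional on $\omega_i = v_i$ the suggestion $\alpha_i$ is independent of $(\bv{\omega}_{\overline{i}}, \bv{\alpha}_{\overline{i}})$, so multiplying \eqref{eq:NE-better2} through by the marginal $Pr[\alpha_i=c_i|\omega_i=v_i]$ recovers \eqref{eq:CE-better2}, while \eqref{eq:CE-better1} reduces immediately to \eqref{eq:NE-better1}. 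For $\script{E}_{CE}^{stoc} \subseteq \script{E}_{CCE}^{stoc}$, I would define $\theta_i(v_i)$ as the weighted average of $\theta_i(v_i,c_i)$ with weights $q_i(v_i,c_i) := \sum_{\bv{\omega}:\omega_i=v_i}\sum_{\bv{\alpha}:\alpha_i=c_i} \pi[\bv{\omega}]Pr[\bv{\alpha}|\bv{\omega}]$, assigning any value in $[0,u_i^{max}]$ when the normalizer vanishes. Summing \eqref{eq:CE-better2} over $c_i$ and using $\sum_{c_i} q_i(v_i,c_i) = \sum_{\bv{\omega}:\omega_i=v_i}\pi[\bv{\omega}]$ yields \eqref{eq:NE-better2}, and the same weighted-sum identity converts \eqref{eq:CE-better1} into \eqref{eq:NE-better1}; the weighted average stays in $[0, u_i^{max}]$ by construction.

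For part (c), I would work in the augmented variable space consisting of $Pr[\bv{\alpha}|\bv{\omega}]$ together with the $\theta$ slack variables required by the CE (respectively CCE) definition. The probability constraints \eqref{eq:dist1}-\eqref{eq:dist2} and the equilibrium constraints are all affine in the pair $(Pr, \theta)$, while the box constraints confine $Pr$ to $[0,1]$ and each $\theta$ coordinate to $[0, u_i^{max}]$, so the augmented feasible set is a compact convex polytope. The sets $\script{E}_{CE}^{stoc}$ and $\script{E}_{CCE}^{stoc}$ are the images of these polytopes under the linear projection onto the $Pr[\bv{\alpha}|\bv{\omega}]$ coordinates; a linear image of a compact convex set in finite dimensions is compact and convex, and boundedness is immediate. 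The main subtlety I foresee is the step in part (b) establishing $\script{E}_{NE}^{stoc} \subseteq \script{E}_{CE}^{stoc}$, where one must carefully use the product form to eliminate the extra conditioning on $\alpha_i = c_i$ from the conditional deviation utility; once this observation is in hand, the remainder reduces to careful algebraic bookkeeping.
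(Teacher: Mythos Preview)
Your proposal is correct and largely parallels the paper. Part (a) is identical to the paper's argument, and your treatment of $\script{E}_{CE}^{stoc} \subseteq \script{E}_{CCE}^{stoc}$ via the weighted average $\theta_i(v_i)$ is exactly what the paper does. Your argument for (c) is in fact slightly more careful than the paper's: the paper describes $\script{E}_{CE}^{stoc}$ and $\script{E}_{CCE}^{stoc}$ directly as intersections of linear constraints, glossing over the fact that the $\theta$ variables are existentially quantified; your formulation via projection of a compact convex polytope in the augmented $(Pr,\theta)$ space makes this explicit.

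The one genuine difference is the inclusion $\script{E}_{NE}^{stoc} \subseteq \script{E}_{CE}^{stoc}$. The paper (Appendix B) argues by contradiction through Lemmas \ref{lem:equiv-all1} and \ref{lem:equiv-all3}: if $Pr[\bv{\alpha}|\bv{\omega}]$ were a NE but not a CE, some randomized deviation $X_i(\omega_i,\alpha_i)$ would strictly improve player $i$'s utility, and the product form lets one simulate this with a deviation $\tilde{X}_i(\omega_i)$ that ignores $\alpha_i$, contradicting the NE property. Your route is more elementary: set $\theta_i(v_i,c_i) := \theta_i(v_i)$ and observe that, under the product form, both sides of \eqref{eq:NE-better2} scale by the factor $Pr[\alpha_i=c_i\,|\,\omega_i=v_i]$ to yield \eqref{eq:CE-better2}, while \eqref{eq:CE-better1} collapses to \eqref{eq:NE-better1}. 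Your approach avoids the machinery of the ``unilateral deviation'' lemmas and is shorter; the paper's approach has the virtue of linking the inclusion to the operational interpretation of equilibrium, but at the cost of an extra layer of indirection.
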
 
 
 \begin{proof} 
 The virtual static game is finite and hence has at least one  mixed strategy NE $Pr[\bv{s}]$ \cite{nash-n-person-games}\cite{nash-games}.
 Let $Pr[\bv{\alpha}|\bv{\omega}]$ be the corresponding conditional mass function defined by \eqref{eq:generate}.  Then 
 $Pr[\bv{\alpha}|\bv{\omega}]$ is a NE for the stochastic game (by  Lemma \ref{lem:equiv-virtual}c), and so $\script{E}_{NE}^{stoc}$ is nonempty.  This proves part (a). 
 
 To prove (c), note that $\script{E}_{CCE}^{stoc}$ is the intersection of the set of all  $Pr[\bv{\alpha}|\bv{\omega}]$ 
 that satisfy  
 the (closed, bounded, and convex) probability simplex 
 constraints \eqref{eq:dist1}-\eqref{eq:dist2} and the set of all $Pr[\bv{\alpha}|\bv{\omega}]$ that satisfy the 
 linear constraints \eqref{eq:NE-better1}-\eqref{eq:NE-better2}.   Similarly, $\script{E}_{CE}^{stoc}$ is the intersection of all $Pr[\bv{\alpha}|\bv{\omega}]$ that satisfy \eqref{eq:dist1}-\eqref{eq:dist2} with all $Pr[\bv{\alpha}|\bv{\omega}]$ that satisfy the linear constraints \eqref{eq:CE-better1}-\eqref{eq:CE-better2}. 
 
 To prove that $\script{E}_{CE}^{stoc} \subseteq \script{E}_{CCE}^{stoc}$, suppose that $Pr[\bv{\alpha}|\bv{\omega}]$ is a CE. 
 Then it satisfies the CE constraints \eqref{eq:CE-better1}-\eqref{eq:CE-better2} for some values $\theta_i(v_i,c_i)$.  Define:\footnote{More precisely, the values $\theta_i(v_i)$ are defined to be $0$ in the special case when $\sum_{\bv{\omega}\in\Omega|\omega_i=v_i}\sum_{\bv{\alpha}\in\script{A}}\pi[\bv{\omega}]Pr[\bv{\alpha}|\bv{\omega}]= 0$.} 
 \begin{eqnarray*}
 \theta_i(v_i) \defequiv \frac{\sum_{\bv{\omega}\in\Omega | \omega_i=v_i}\sum_{\bv{\alpha}\in\script{A}}Pr[\bv{\omega}]Pr[\bv{\alpha}|\bv{\omega}]\theta_i(v_i, \alpha_i)}{\sum_{\bv{\omega}\in\Omega | \omega_i=v_i}\sum_{\bv{\alpha}\in\script{A}}Pr[\bv{\omega}]Pr[\bv{\alpha}|\bv{\omega}]} 
 \end{eqnarray*}
 These values satisfy $\theta_i(v_i) \in [0, u_i^{max}]$. Summing \eqref{eq:CE-better2} over $c_i \in \script{A}_i$ and applying the above definition of $\theta_i(v_i)$ proves that \eqref{eq:NE-better1}-\eqref{eq:NE-better2} hold.  The proof that $\script{E}_{NE}^{stoc} \subseteq \script{E}_{CE}^{stoc}$ is given in Appendix B. 
 \end{proof} 
 
 \subsection{Complexity comparison} 
 
 The CCE for the virtual static game is defined by the constraints \eqref{eq:NE-virtual}.  There is one such constraint for each $i \in \{1, \ldots, N\}$ and each $r_i \in \script{S}_i$, where $\script{S}_i$ is the number of pure strategies for player $i$. Thus,
 the number of constraints is: 
 \[ \sum_{i=1}^N|\script{S}_i| = \sum_{i=1}^N |\script{A}_i|^{|\Omega_i|} \]
This grows exponentially in the size of the sets $\Omega_i$. 
Thus, even though these constraints are linear, computation of a CCE for the virtual static game can be very complex.

The CCE constraints for the stochastic game are given in \eqref{eq:NE-better1}-\eqref{eq:NE-better2}.  There are $N$ constraints in \eqref{eq:NE-better1}.  For \eqref{eq:NE-better2}, there is one such constraint for each $i \in \{1, \ldots, N\}$, each $v_i \in \Omega_i$, and each $\beta_i \in \script{A}_i$, for a total  of: 
\[ N + \sum_{i=1}^N |\Omega_i||\script{A}_i| \]
This is \emph{linear} in the sizes of the $\Omega_i$ and $\script{A}_i$ sets. 
Thus, the general CCE constraints  \eqref{eq:NE-better1}-\eqref{eq:NE-better2}
provide a significant complexity reduction.    A similar ``exponential-to-polynomial'' complexity reduction holds for the CE definition 
 when  comparing  the constraints in \eqref{eq:CE-virtual} to those in  \eqref{eq:CE-better1}-\eqref{eq:CE-better2}.

\subsection{Unilateral changes cannot increase utility} 

The stochastic 
NE, CE, and CCE definitions above have the following property:  Assuming actions are chosen according to an equilibrium mass function $Pr[\bv{\alpha}|\bv{\omega}]$, a given player cannot improve her utility by unilaterally deviating from these actions.  This is formalized in the lemmas below.  

First note that if 
$\bv{\alpha}(t)$ is chosen according to a mass function $Pr[\bv{\alpha}|\bv{\omega}]$, then for all $i \in \{1, \ldots, N\}$: 
\[ \expect{u_i(t)} = \sum_{\bv{\omega}\in\Omega}\sum_{\bv{\alpha}\in\script{A}} \pi[\bv{\omega}]Pr[\bv{\alpha}|\bv{\omega}]\hat{u}_i(\bv{\alpha}, \bv{\omega}) \]

Now fix $i \in \{1, \ldots, N\}$.  For all $\omega_i \in \Omega$, let $X_i(\omega_i)$ be a random function that maps a point $\omega_i \in \Omega_i$ to a randomly chosen point $X_i(\omega_i) \in \script{A}_i$ according to some distribution that depends on $\omega_i$. 
It is assumed that for a given slot $t$,  $X_i(\omega_i(t))$ is conditionally independent of $\bv{\alpha}(t)$ and $\bv{\omega}(t)$ given $\omega_i(t)$.   The expected utility on slot $t$ associated with unilaterally changing action $\alpha_i(t)$ to action $X_i(\omega_i(t))$ is: 
\[ \sum_{\bv{\omega}\in\Omega}\sum_{\bv{\alpha}\in\script{A}} \pi[\bv{\omega}]Pr[\bv{\alpha}|\bv{\omega}]\expect{\hat{u}_i\left((X_i(\omega_i), \bv{\alpha}_{\overline{i}}), \bv{\omega}\right)} \]
where the expectation on the right-hand-side is with respect to the distribution of $X_i(\omega_i)$.

\begin{lem} \label{lem:equiv-all1} $Pr[\bv{\alpha}|\bv{\omega}]$ satisfies \eqref{eq:NE-better1}-\eqref{eq:NE-better2} 
if and only if: 
\begin{align} 
\expect{u_i(t)}  \geq \sum_{\bv{\omega}\in\Omega}\sum_{\bv{\alpha}\in\script{A}} \pi[\bv{\omega}]Pr[\bv{\alpha}|\bv{\omega}]\expect{\hat{u}_i\left((X_i(\omega_i), \bv{\alpha}_{\overline{i}}), \bv{\omega}\right)} \label{eq:unilateral-CCE} 
\end{align} 
for all $i \in \{1, \ldots, N\}$ and all randomized functions $X_i(\omega_i)$.
\end{lem}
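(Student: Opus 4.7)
The plan is to prove both directions by viewing the right-hand side of \eqref{eq:unilateral-CCE} as a convex combination (indexed by $v_i \in \Omega_i$) of the per-$v_i$ deviation utilities that already appear on the right-hand side of \eqref{eq:NE-better2}. The $\theta_i(v_i)$ variables act as an upper envelope on those per-$v_i$ utilities; once this bookkeeping is set up the inequalities chain together directly.

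For the forward direction, assume \eqref{eq:NE-better1}--\eqref{eq:NE-better2} hold with some $\theta_i(v_i) \in [0,u_i^{max}]$. Fix $i$ and an arbitrary randomized function $X_i(\omega_i)$ whose conditional distribution given $\omega_i=v_i$ puts mass $q_{v_i}(\beta_i)$ on each $\beta_i \in \script{A}_i$. Multiply \eqref{eq:NE-better2} by $q_{v_i}(\beta_i)$, sum over $\beta_i$, and then sum over $v_i \in \Omega_i$; because $\sum_{\beta_i} q_{v_i}(\beta_i)=1$ the left-hand side collapses to $\sum_{\bv{\omega}\in\Omega} \sum_{\bv{\alpha}\in\script{A}} \pi[\bv{\omega}] Pr[\bv{\alpha}|\bv{\omega}] \theta_i(\omega_i)$, while the right-hand side becomes exactly the right-hand side of \eqref{eq:unilateral-CCE}. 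Chaining with \eqref{eq:NE-better1} yields \eqref{eq:unilateral-CCE}.

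For the converse, the task is to manufacture the $\theta_i(v_i)$. For each $v_i \in \Omega_i$ with $\pi[\omega_i=v_i]>0$ define
\[
\theta_i(v_i) \defequiv \max_{\beta_i \in \script{A}_i} \frac{\sum_{\bv{\omega}\in\Omega|\omega_i=v_i}\sum_{\bv{\alpha}\in\script{A}} \pi[\bv{\omega}] Pr[\bv{\alpha}|\bv{\omega}]\hat{u}_i((\beta_i,\bv{\alpha}_{\overline i}),\bv{\omega})}{\pi[\omega_i=v_i]},
\]
and $\theta_i(v_i)\defequiv 0$ otherwise. These lie in $[0,u_i^{max}]$ because the integrand is bounded by $u_i^{max}$. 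Constraint \eqref{eq:NE-better2} is immediate from the definition (the maximum dominates any fixed $\beta_i$). For \eqref{eq:NE-better1}, let $\beta_i^\star(v_i)$ be a maximizer in the definition of $\theta_i(v_i)$ and set the \emph{deterministic} deviation $X_i(v_i)\defequiv\beta_i^\star(v_i)$. A direct computation shows that
\[
\sum_{\bv{\omega}\in\Omega}\sum_{\bv{\alpha}\in\script{A}} \pi[\bv{\omega}] Pr[\bv{\alpha}|\bv{\omega}]\theta_i(\omega_i) = \sum_{\bv{\omega}\in\Omega}\sum_{\bv{\alpha}\in\script{A}} \pi[\bv{\omega}] Pr[\bv{\alpha}|\bv{\omega}] \hat{u}_i\!\left((X_i(\omega_i),\bv{\alpha}_{\overline i}),\bv{\omega}\right),
\]
since summing $\pi[\omega_i=v_i]\theta_i(v_i)$ over $v_i$ regroups the numerators across the partition of $\Omega$ by $\omega_i$. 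Invoking the hypothesis \eqref{eq:unilateral-CCE} with this particular $X_i$ yields exactly \eqref{eq:NE-better1}.

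The only subtle point I expect is the $\omega_i$ states with $\pi[\omega_i=v_i]=0$: there \eqref{eq:NE-better2} holds trivially (both sides are zero) and setting $\theta_i(v_i)=0$ preserves the bounds, so these contribute nothing to \eqref{eq:NE-better1} either. The rest is just regrouping sums by the value of $\omega_i$ and exploiting that the worst-case randomized $X_i$ is attained by a deterministic $\omega_i$-to-$\script{A}_i$ selector, which converts the $\max$ inside $\theta_i(v_i)$ into a tight bound.
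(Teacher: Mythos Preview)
Your proof is correct and follows essentially the same route as the paper: the forward direction multiplies \eqref{eq:NE-better2} by the conditional weights $q_{v_i}(\beta_i)$, sums over $\beta_i$ and then $v_i$, and chains with \eqref{eq:NE-better1}; the converse defines $\theta_i(v_i)$ as the per-$v_i$ maximum deviation payoff (normalized), so that \eqref{eq:NE-better2} holds by construction, and then plugs the deterministic maximizer $X_i(v_i)=\beta_i^\star(v_i)$ into \eqref{eq:unilateral-CCE} to recover \eqref{eq:NE-better1}. Your denominator $\pi[\omega_i=v_i]$ is the same as the paper's $\sum_{\bv{\omega}|\omega_i=v_i}\sum_{\bv{\alpha}}\pi[\bv{\omega}]Pr[\bv{\alpha}|\bv{\omega}]$ since $\sum_{\bv{\alpha}}Pr[\bv{\alpha}|\bv{\omega}]=1$, and you handle the zero-probability case in the same way.
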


\begin{proof} 
Suppose $Pr[\bv{\alpha}|\bv{\omega}]$ satisfies \eqref{eq:NE-better1}-\eqref{eq:NE-better2}. Fix $i \in \{1, \ldots, N\}$. Fix 
a random function $X_i(\omega_i)$, and define: 
\[ q_i(\beta_i| v_i) = Pr[X_i(v_i) = \beta_i] \: \: \forall v_i \in \Omega_i, \beta_i \in \script{A}_i \]
Multiplying \eqref{eq:NE-better2} by $q_i(\beta_i|v_i)$ and summing over $\beta_i \in \script{A}_i$ gives: 
\begin{align*} 
&\sum_{\bv{\omega}\in\Omega | \omega_i = v_i} \sum_{\bv{\alpha} \in \script{A}} \pi[\bv{\omega}]Pr[\bv{\alpha}|\bv{\omega}]\theta_i(v_i) \\
&\geq \sum_{\bv{\omega} \in \Omega | \omega_i = v_i}\sum_{\bv{\alpha} \in \script{A}}\pi[\bv{\omega}]Pr[\bv{\alpha}|\bv{\omega}]\sum_{\beta_i\in\script{A}_i} q_i(\beta_i|v_i)
\hat{u}_i((\beta_i, \bv{\alpha}_{\overline{i}}), \bv{\omega}) \\
&=\sum_{\bv{\omega} \in \Omega | \omega_i = v_i}\sum_{\bv{\alpha} \in \script{A}}\pi[\bv{\omega}]Pr[\bv{\alpha}|\bv{\omega}]\expect{
\hat{u}_i((X_i(v_i), \bv{\alpha}_{\overline{i}}), \bv{\omega}) }  
\end{align*} 
Summing both sides over $v_i \in \Omega_i$ and using \eqref{eq:NE-better1} gives the expression \eqref{eq:unilateral-CCE}. 

Now suppose \eqref{eq:unilateral-CCE} holds for all $i \in \{1, \ldots, N\}$ and all randomized functions $X_i(\omega_i)$. 
Fix $i \in \{1, \ldots, N\}$.  For each  $v_i \in \Omega_i$ deterministically 
define $X_i(v_i)$ as the element $\beta_i^*$ that maximizes
the right-hand-side of \eqref{eq:NE-better2} over all $\beta_i \in \script{A}_i$.  Likewise, define $\theta_i(v_i)$ by: 
 \begin{eqnarray*}
 \theta_i(v_i) \defequiv \frac{\sum_{\bv{\omega}\in\Omega | \omega_i=v_i}\sum_{\bv{\alpha}\in\script{A}}Pr[\bv{\omega}]Pr[\bv{\alpha}|\bv{\omega}]\hat{u}_i((X_i(v_i), \bv{\alpha}_{\overline{i}}), \bv{\omega})}{\sum_{\bv{\omega}\in\Omega | \omega_i=v_i}\sum_{\bv{\alpha}\in\script{A}}Pr[\bv{\omega}]Pr[\bv{\alpha}|\bv{\omega}]} 
 \end{eqnarray*}
 assuming the denominator is nonzero (else, define $\theta_i(v_i)=0$).  Then \eqref{eq:NE-better2} holds by construction.  Further, 
 inequality \eqref{eq:NE-better1} holds because it is equivalent to \eqref{eq:unilateral-CCE} for the given $X_i(v_i)$ function.  
\end{proof} 

The next lemma extends the random function $X_i(\omega_i)$ to $X_i(\omega_i, \alpha_i)$, so that its distribution depends on both $\omega_i$ and $\alpha_i$: 

\begin{lem} \label{lem:equiv-all3} $Pr[\bv{\alpha}|\bv{\omega}]$ is a CE for the stochastic game if and only if:  
\[ \expect{u_i(t)} \geq 
 \sum_{\bv{\omega}\in\Omega}\sum_{\bv{\alpha}\in\script{A}} \pi[\bv{\omega}]Pr[\bv{\alpha}|\bv{\omega}]\expect{\hat{u}_i\left((X_i(\omega_i, \alpha_i), \bv{\alpha}_{\overline{i}}), \bv{\omega}\right)}  \]
for all $i \in \{1, \ldots, N\}$ and all randomized functions $X_i(\omega_i, \alpha_i)$. 
\end{lem}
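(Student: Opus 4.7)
The plan is to mirror the proof of Lemma \ref{lem:equiv-all1} essentially verbatim, but carrying an extra index $c_i \in \script{A}_i$ throughout, since the CE constraints \eqref{eq:CE-better1}--\eqref{eq:CE-better2} introduce $\theta_i(v_i,c_i)$ indexed by both the observed event $v_i$ and the prescribed action $c_i$. Both implications are obtained by a suitable averaging/selection argument, so the proof should split as ($\Rightarrow$) and ($\Leftarrow$).

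For the forward direction, I would assume $Pr[\bv{\alpha}|\bv{\omega}]$ is a CE, fix $i \in \{1,\ldots,N\}$ and a randomized $X_i(\omega_i,\alpha_i)$, and set $q_i(\beta_i \mid v_i,c_i) \defequiv Pr[X_i(v_i,c_i)=\beta_i]$. Multiplying \eqref{eq:CE-better2} by $q_i(\beta_i \mid v_i,c_i)$ and summing over $\beta_i \in \script{A}_i$ folds the unilateral deviation into an expectation over $X_i(v_i,c_i)$ on the right-hand side, while leaving $\theta_i(v_i,c_i)$ untouched on the left. Then summing over all $c_i \in \script{A}_i$ and $v_i \in \Omega_i$ collapses the restricted sums $\{\bv{\omega}\mid\omega_i=v_i\}\times\{\bv{\alpha}\mid\alpha_i=c_i\}$ into the full sums over $\Omega\times\script{A}$, yielding
\[
\sum_{\bv{\omega}}\sum_{\bv{\alpha}}\pi[\bv{\omega}]Pr[\bv{\alpha}|\bv{\omega}]\theta_i(\omega_i,\alpha_i) \geq \sum_{\bv{\omega}}\sum_{\bv{\alpha}}\pi[\bv{\omega}]Pr[\bv{\alpha}|\bv{\omega}]\expect{\hat{u}_i((X_i(\omega_i,\alpha_i),\bv{\alpha}_{\overline{i}}),\bv{\omega})}.
\]
Chaining with \eqref{eq:CE-better1} then delivers the claimed inequality, since $\expect{u_i(t)}$ upper-bounds the $\theta_i(\omega_i,\alpha_i)$-average.

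For the converse, I would assume the displayed inequality holds for every randomized $X_i$, and construct witnesses $\theta_i(v_i,c_i)$ together with a deterministic $X_i$. For each $(v_i,c_i) \in \Omega_i\times\script{A}_i$, define $X_i(v_i,c_i)$ as any $\beta_i^\ast \in \script{A}_i$ maximizing the right-hand side of \eqref{eq:CE-better2}, and define $\theta_i(v_i,c_i)$ as the weighted average of $\hat{u}_i((X_i(v_i,c_i),\bv{\alpha}_{\overline{i}}),\bv{\omega})$ under the restricted distribution, i.e.
\[
\theta_i(v_i,c_i) \defequiv \frac{\sum_{\bv{\omega}\mid\omega_i=v_i}\sum_{\bv{\alpha}\mid\alpha_i=c_i}\pi[\bv{\omega}]Pr[\bv{\alpha}|\bv{\omega}]\hat{u}_i((X_i(v_i,c_i),\bv{\alpha}_{\overline{i}}),\bv{\omega})}{\sum_{\bv{\omega}\mid\omega_i=v_i}\sum_{\bv{\alpha}\mid\alpha_i=c_i}\pi[\bv{\omega}]Pr[\bv{\alpha}|\bv{\omega}]},
\]
with the convention $\theta_i(v_i,c_i)=0$ when the denominator vanishes. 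By the maximality of $\beta_i^\ast$, constraint \eqref{eq:CE-better2} then holds for every $\beta_i \in \script{A}_i$ (the right side for any other $\beta_i$ is no larger than it is for $\beta_i^\ast$, which equals the left side by construction). Finally, applying the hypothesis to this particular deterministic $X_i$ yields exactly \eqref{eq:CE-better1}, and $\theta_i(v_i,c_i) \in [0, u_i^{max}]$ by the utility bounds.

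I do not anticipate a genuine obstacle; the main point of care is the degenerate case where the denominator defining $\theta_i(v_i,c_i)$ is zero (handled exactly as in the footnote to the preceding lemma of the paper) and keeping the two layers of summation $(v_i,c_i)$ straight so that the restricted sums reassemble cleanly into the unrestricted expectation. The argument is essentially a one-parameter extension of Lemma \ref{lem:equiv-all1}, using \eqref{eq:CE-better1}--\eqref{eq:CE-better2} in place of \eqref{eq:NE-better1}--\eqref{eq:NE-better2}.
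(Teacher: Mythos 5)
Your proposal is correct and follows exactly the route the paper intends: the paper omits this proof, stating only that it is ``similar to that of Lemma~\ref{lem:equiv-all1},'' and your argument is precisely that one-parameter extension, averaging \eqref{eq:CE-better2} over the deviation distribution and summing the restricted sums over $(v_i,c_i)$ in the forward direction, and constructing the maximizing deterministic $X_i(v_i,c_i)$ together with the conditional-average $\theta_i(v_i,c_i)$ (with the zero-denominator convention) in the converse. No gaps.
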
 

\begin{proof} 
The proof is similar to that of Lemma \ref{lem:equiv-all1} and is omitted for brevity. 
\end{proof}

\subsection{Optimization objective} 

As before, define $\phi(u_1, \ldots, u_N)$ as a continuous and concave function  that maps $\times_{i=1}^N[0, u_i^{max}]$ 
to the set of real numbers. 
The goal is to choose messages $\bv{M}(t) = \bv{\alpha}(t)$ according to a conditional probability mass function 
$Pr[\bv{\alpha}(t) | \bv{\omega}(t)]$ that  solves the problem below: 
\begin{eqnarray}
\mbox{Maximize:} &  \phi(\overline{u}_1, \ldots, \overline{u}_N) \label{eq:p1}   \\
\mbox{Subject to:} & \overline{u}_i = \sum_{\bv{\omega} \in \Omega} \sum_{\bv{\alpha} \in \script{A}} \pi[\bv{\omega}]Pr[\bv{\alpha}|\bv{\omega}]\hat{u}_i(\bv{\alpha},\bv{\omega}) \nonumber  \\
& \: \: \: \: \forall i \in \{1, \ldots, N\} \label{eq:p2} \\
& \mbox{CCE constraints \eqref{eq:NE-better1}-\eqref{eq:NE-better2} are satisfied} \label{eq:p3}  \\
& Pr[\bv{\alpha}|\bv{\omega}] \geq 0 \: \: \forall \bv{\alpha} \in \script{A}, \bv{\omega} \in \Omega \label{eq:p4}  \\
& \sum_{\bv{\alpha}\in\script{A}} Pr[\bv{\alpha}|\bv{\omega}] = 1 \: \: \forall \bv{\omega} \in \Omega \label{eq:p5} 
\end{eqnarray} 

This is a convex program in the unknowns $Pr[\bv{\alpha}|\bv{\omega}]$.  The next section
presents an online solution that does not require knowledge of the probabilities $\pi[\bv{\omega}]$. 

\section{Lyapunov optimization} \label{section:lyap-opt}

For a real-valued stochastic process $u(t)$ defined over slots $t \in \{0, 1, 2, \ldots\}$, define: 
\[ \mbox{$\overline{u}(t) \defequiv \frac{1}{t}\sum_{\tau=0}^{t-1} \expect{u(\tau)}$} \]
Recall that $u_i(t) = \hat{u}_i(\bv{\alpha}(t), \bv{\omega}(t))$ for each player $i$ and each slot $t$. 
For $i \in \{1, \ldots, N\}$, $v \in \Omega_i$, and $\beta \in \script{A}_i$, define: 
\begin{eqnarray*}
u_{i,v}^{(\beta)}(t) &\defequiv& \hat{u}_i\left( (\beta, \bv{\alpha}_{\overline{i}}(t)), \bv{\omega}(t)\right)1\{\omega_i(t) = v\} 
\end{eqnarray*}
where $1\{\omega_i(t) = v\}$ is an indicator function that is 1 if $\omega_i(t) = v$, and 0 else.  The value
$u_{i,v}^{(\beta)}(t)$ is zero if $\omega_i(t) \neq v$, and else it is the utility player $i$ would receive on slot $t$ 
if it uses action  $\beta$ (assuming $\bv{\alpha}_{\overline{i}}(t)$ are the actions of others).

A reformulation of \eqref{eq:p1}-\eqref{eq:p5} that does not require the decisions $\bv{\alpha}(t)$ to use the same conditional distribution $Pr[\bv{\alpha}|\bv{\omega}]$ every slot is as follows: 
Every slot $t$, the game manager 
observes $\bv{\omega}(t)$ and chooses an action vector $\bv{\alpha}(t) \in \script{A}$ and variables $\theta_{i,v}(t)$ to solve: 
\begin{align}
&\hspace{-.1in}\mbox{Maximize:} \nonumber \\
& \hspace{+0in} \liminf_{t\rightarrow\infty} \phi(\overline{u}_1(t), \ldots, \overline{u}_N(t)) \label{eq:LL1-lowc} \\
&\hspace{-.1in}\mbox{Subject to:} \nonumber \\
&\hspace{+0in} \liminf_{t\rightarrow\infty} \left[\overline{u}_i(t) - \sum_{v\in\Omega_i}\overline{\theta}_{i,v}(t)\right] \geq 0 \nonumber \\
&\hspace{+1.6in}  \forall i \in \{1, \ldots, N\}  \label{eq:LLz-lowc} \\
&\hspace{+0in} \liminf_{t\rightarrow\infty} [\overline{\theta}_{i,v}(t) - \overline{u}_{i,v}^{(\beta)}(t)] \geq 0 \nonumber \\
&  \hspace{+.5in} \forall i \in \{1, \ldots, N\}, \forall v \in \Omega_i, \forall \beta \in \script{A}_i  \label{eq:LL2-lowc} \\
&\hspace{+0in} \bv{\alpha}(t) \in \script{A} \: \: \: \:  \forall t \label{eq:LLGGG} \\
&\hspace{+0in} \theta_{i,v}(t) \in [0, u_i^{max}1\{\omega_i(t)=v\}] \nonumber \\
& \hspace{+.9in} \forall t, \forall i \in \{1, \ldots, N\}, \forall v \in \Omega_i \label{eq:LL4-lowc} 
\end{align}

The constraints \eqref{eq:LLz-lowc} correspond to \eqref{eq:NE-better1}, and the constraints \eqref{eq:LL2-lowc} correspond
to \eqref{eq:NE-better2}.  Such time average problems can be solved by stationary and randomized algorithms \cite{sno-text}. Specifically, if $Pr[\bv{\alpha}|\bv{\omega}]$ and $\theta_i(v)$ are optimal variables for problem 
\eqref{eq:p1}-\eqref{eq:p5}, then the following is an optimal solution to \eqref{eq:LL1-lowc}-\eqref{eq:LL4-lowc}:  Every slot $t$, observe $\bv{\omega}(t)$ and independently 
choose $\bv{\alpha}(t)$ according to the conditional mass function 
$Pr[\bv{\alpha}|\bv{\omega}]$, and choose $\theta_{i,v}(t) = \theta_{i}(v)1\{\omega_i(t)=v\}$.  Conversely,  any solution to \eqref{eq:LL1-lowc}-\eqref{eq:LL4-lowc} has the following property:  For any $\epsilon>0$, there is a positive integer 
$T_{\epsilon}$ such that for any $t>T_{\epsilon}$,  time average expectations over $\{0, 1, \ldots, t-1\}$ produce 
conditional probability mass functions $Pr[\bv{\alpha}|\bv{\omega}]$ that are within $\epsilon$ of satisfying all constraints and achieving the optimal objective function value 
of problem  \eqref{eq:p1}-\eqref{eq:p5}.  Specifically: 
\begin{eqnarray*}
Pr[\bv{\alpha}|\bv{\omega}] = \left\{\begin{array}{cc}
& \frac{\frac{1}{t}\sum_{\tau=0}^{t-1} \expect{1\{\bv{\alpha}(t)=\bv{\alpha}, \bv{\omega}(t)=\bv{\omega}\}}}{\pi[\bv{\omega}]} , \mbox{ if $\pi[\bv{\omega}]>0$}\\
& 0 , \mbox{ if $\pi[\bv{\omega}] = 0$} 
\end{array}\right. 
\end{eqnarray*}

\subsection{Transformation via Jensen's inequality}

Using the auxiliary variable technique of \cite{sno-text}, the problem \eqref{eq:LL1-lowc}-\eqref{eq:LL4-lowc}, which seeks to maximize a nonlinear function of  a time average, can be transformed into a maximization of the time average of a nonlinear function. To this end, let 
 $\bv{\gamma}(t) = (\gamma_1(t), \ldots, \gamma_N(t))$ be an \emph{auxiliary vector} that the game manager
chooses on slot $t$, assumed to satisfy $0 \leq \gamma_i(t) \leq u_i^{max}$ for all $t$ and all $i$. 
Define: 
\[ g(t) \defequiv \phi(\gamma_1(t), \ldots, \gamma_N(t)) \]
Jensen's inequality implies that for all slots $t>0$: 
\begin{equation} \label{eq:jensen} 
 \overline{g}(t) \leq \phi(\overline{\gamma}_1(t), \ldots, \overline{\gamma}_N(t)) 
 \end{equation} 

Now consider the following problem:  Every slot $t \in \{0, 1, 2, \ldots\}$ the game manager 
observes $\bv{\omega}(t)$ and chooses an action vector $\bv{\alpha}(t) \in \script{A}$, variables $\theta_{i,v}(t)$,  and an auxiliary vector
$\bv{\gamma}(t)$ to solve:

\begin{align}
&\hspace{-.1in}\mbox{Maximize:} \nonumber \\
& \hspace{+0in} \liminf_{t\rightarrow\infty} \overline{g}(t) \label{eq:LL1-lowcmod} \\
&\hspace{-.1in}\mbox{Subject to:} \nonumber \\
& \lim_{t\rightarrow\infty} |\overline{\gamma}_i(t) - \overline{u}_i(t)| = 0 \: \: \forall i \in \{1, \ldots, N\} \label{eq:LLa-lowcmod} \\
&\hspace{+0in} \liminf_{t\rightarrow\infty} \left[\overline{u}_i(t) - \sum_{v\in\Omega_i}\overline{\theta}_{i,v}(t)\right] \geq 0 \nonumber \\
&\hspace{+1.7in}  \forall i \in \{1, \ldots, N\}  \label{eq:LLz-lowcmod} \\
&\hspace{+0in} \liminf_{t\rightarrow\infty} [\overline{\theta}_{i,v}(t) - \overline{u}_{i,v}^{(\beta)}(t)] \geq 0 \nonumber \\
&  \hspace{+.6in} \forall i \in \{1, \ldots, N\}, \forall v \in \Omega_i, \forall \beta \in \script{A}_i  \label{eq:LL2-lowcmod} \\
&\hspace{+0in} \bv{\alpha}(t) \in \script{A} \: \: \: \: \forall t \label{eq:GHGHGH} \\
&\hspace{+0in}   \theta_{i,v}(t) \in [0, u_i^{max}1\{\omega_i(t)=v\}] \nonumber \\
& \hspace{+1in} \forall t, \forall i \in \{1, \ldots, N\}, \forall v \in \Omega_i\label{eq:LL4-lowcmod} \\
& 0 \leq \gamma_i(t) \leq u_i^{max} \hspace{+.5in} \forall t, \forall i \in \{1, \ldots, N\} \label{eq:LL5-lowcmod} 
\end{align}

The problems \eqref{eq:LL1-lowc}-\eqref{eq:LL4-lowc} and \eqref{eq:LL1-lowcmod}-\eqref{eq:LL5-lowcmod}  are equivalent.  To see this, let $\phi_1^*$ and $\phi_2^*$ be the optimal objective values for problems 
\eqref{eq:LL1-lowc}-\eqref{eq:LL4-lowc} and \eqref{eq:LL1-lowcmod}-\eqref{eq:LL5-lowcmod}, respectively. 
Let $\bv{\alpha}^*(t)$ and $\theta_{i,v}^*(t)$ be optimal stationary and randomized decisions that solve
\eqref{eq:LL1-lowc}-\eqref{eq:LL4-lowc}, and let 
$u_i^* \defequiv \expect{\hat{u}_i(\bv{\alpha}^*(t), \bv{\omega}(t))}$ be the corresponding expected utilities for player $i$. Then: 
\[ \phi_1^* = \phi(u_1^*, \ldots, u_N^*) \]
The decisions $\bv{\alpha}^*(t)$ and $\theta_{i,v}^*(t)$ can be used, together with  
$\gamma_i(t)  = u_i^*$ for all $t$ and all $i \in \{1, \ldots, N\}$, to satisfy all constraints of the new 
problem \eqref{eq:LL1-lowcmod}-\eqref{eq:LL5-lowcmod} with (possibly sub-optimal) objective function value $\overline{g} = \phi_1^*$. Because this is not necessarily optimal for the new problem, one has $\phi_2^* \geq \phi_1^*$. 

On the other hand, let $\bv{\alpha}(t)$, $\theta_{i,v}(t)$, and $\gamma_i(t)$ be decisions that solve the new problem 
\eqref{eq:LL1-lowcmod}-\eqref{eq:LL5-lowcmod}.  Then these same decisions satisfy all constraints of the problem 
\eqref{eq:LL1-lowc}-\eqref{eq:LL4-lowc} and thus yield an objective function value no more than $\phi_1^*$, so that: 
\begin{eqnarray}
\phi_1^* &\geq& \liminf_{t\rightarrow\infty} \phi(\overline{u}_1(t), \ldots, \overline{u}_N(t)) \nonumber\\
&=&  \liminf_{t\rightarrow\infty} \phi(\overline{\gamma}_1(t), \ldots, \overline{\gamma}_N(t)) \label{eq:equiv1} \\
&\geq& \liminf_{t\rightarrow\infty} \overline{g}(t) \label{eq:equiv2} \\
&=& \phi_2^* \label{eq:equiv3}
\end{eqnarray}
where \eqref{eq:equiv1} follows by \eqref{eq:LLa-lowcmod} together with 
continuity of $\phi(\cdot)$, \eqref{eq:equiv2} follows by Jensen's inequality \eqref{eq:jensen}, and 
\eqref{eq:equiv3} follows because the decisions are optimal for the new problem. 
It follows that $\phi_1^* = \phi_2^*$.  In particular, 
any solution to \eqref{eq:LL1-lowcmod}-\eqref{eq:LL5-lowcmod} also solves \eqref{eq:LL1-lowc}-\eqref{eq:LL4-lowc}.

\subsection{The drift-plus-penalty algorithm} \label{section:alg}

For the constraints \eqref{eq:LLz-lowcmod}, for each $i \in \{1, \ldots, N\}$ 
define a \emph{virtual queue} $Q_{i}(t)$ with update equation: 
\begin{equation*}
Q_{i}(t+1) = \max\left[Q_{i}(t) + \sum_{v\in\Omega_i}\theta_{i,v}(t) - u_i(t), 0\right] 
\end{equation*} 
The above looks like a slotted time queueing equation with arrival process $\sum_{v \in \Omega_i} \theta_{i,v}(t)$ and service process
$u_i(t)$. The intuition is that if a control algorithm is constructed that makes these queues \emph{mean rate stable}, so that: 
\[ \lim_{t\rightarrow\infty} \frac{\expect{Q_i(t)}}{t} = 0 \]
then constraint \eqref{eq:LLz-lowcmod} is satisfied \cite{sno-text}. This queueing update can be simplified using the identity: 
\[ \sum_{v\in\Omega_i} \theta_{i,v}(t) = \theta_{i,\omega_i(t)}(t) \]
Hence: 
\begin{equation} \label{eq:q-update} 
Q_{i}(t+1) = \max[Q_{i}(t) + \theta_{i,\omega_i(t)}(t) - u_i(t), 0] 
\end{equation} 

Likwewise, to enforce the constraint 
\eqref{eq:LL2-lowcmod}, for each $i\in\{1, \ldots, N\}$, $v\in\Omega_i$, $\beta \in \script{A}_i$, define 
a virtual queue $J_{i,v}^{(\beta)}(t)$ with update equation: 
\begin{equation} \label{eq:j-update} 
J_{i,v}^{(\beta)}(t+1) = \max[J_{i,v}^{(\beta)}(t) + u_{i,v}^{(\beta)}(t) - \theta_{i, v}(t), 0] 
\end{equation} 

Finally, for the constraints \eqref{eq:LLa-lowcmod}, for each $i \in \{1, \ldots, N\}$ define a virtual queue $Z_i(t)$ with update equation: 
\begin{equation} \label{eq:z-update} 
Z_i(t+1) = Z_i(t) + \gamma_i(t) - u_i(t) 
\end{equation} 

Define the function: 
\begin{eqnarray*}
 L(t) &=& \frac{1}{2} \sum_{i=1}^N Z_i(t)^2 + \frac{1}{2}\sum_{i=1}^NQ_i(t)^2 \\
 && +  \frac{1}{2}\sum_{i=1}^N\sum_{v\in\Omega_i, \beta\in\script{A}_i} J_{i,v}^{(\beta)}(t)^2 
 \end{eqnarray*}
 This is called a \emph{Lyapunov function}. 
 Define $\Delta(t) \defequiv L(t+1) - L(t)$, called the \emph{Lyapunov drift} on slot $t$. The 
 drift-plus-penalty algorithm is defined by choosing control actions greedily every slot to minimize
 a bound on the \emph{drift-plus-penalty expression} $\Delta(t) - Vg(t)$. Here, $-g(t)$ is the ``penalty'' and $V$
 is a nonnegative constant that affects a tradeoff between convergence time and proximity to the optimal solution. 
 
 \begin{lem} For all slots $t$ one has: 
 \begin{align}
 &\Delta(t) - Vg(t)   \leq \nonumber  \\
 & B - Vg(t) + \sum_{i=1}^N Z_i(t)(\gamma_i(t) - u_i(t)) \nonumber \\
 & + \sum_{i=1}^N Q_i(t)(\theta_{i,\omega_i(t)}(t) - u_i(t)) \nonumber \\
 & + \sum_{i=1}^N\sum_{v\in\Omega_i, \beta\in\script{A}_i}J_{i,v}^{(\beta)}(t)(u_{i,v}^{(\beta)}(t)-\theta_{i,v}(t)) \label{eq:dpp-lem} 
 \end{align} 
 where: 
 \[ B \defequiv \sum_{i=1}^N (u_i^{max})^2 + \frac{1}{2}\sum_{i=1}^N |\script{A}_i|(u_i^{max})^2 \]
 \end{lem}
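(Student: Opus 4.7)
The plan is the standard Lyapunov-drift squaring argument, applied to each of the three families of virtual queues, followed by a careful accounting of which cross-terms are automatically zero. The only non-routine step is recognizing that the $J_{i,v}^{(\beta)}$ update contributes only a single nonzero $v$ per slot, which is what keeps $B$ proportional to $|\script{A}_i|$ rather than $|\Omega_i|\,|\script{A}_i|$.

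First, I would apply the elementary inequality $(\max[X+d,0])^2 \le X^2 + d^2 + 2Xd$ to \eqref{eq:q-update} and \eqref{eq:j-update}, and the exact identity $(X+d)^2 = X^2 + d^2 + 2Xd$ to \eqref{eq:z-update} (since $Z_i$ has no projection). Summing over all queues and dividing by $2$ gives
\begin{align*}
\Delta(t) &\le \tfrac{1}{2}\sum_{i=1}^N (\gamma_i(t)-u_i(t))^2 + \sum_{i=1}^N Z_i(t)(\gamma_i(t)-u_i(t)) \\
&\quad + \tfrac{1}{2}\sum_{i=1}^N(\theta_{i,\omega_i(t)}(t)-u_i(t))^2 + \sum_{i=1}^N Q_i(t)(\theta_{i,\omega_i(t)}(t)-u_i(t)) \\
&\quad + \tfrac{1}{2}\sum_{i=1}^N\sum_{v\in\Omega_i,\beta\in\script{A}_i}(u_{i,v}^{(\beta)}(t)-\theta_{i,v}(t))^2 \\
&\quad + \sum_{i=1}^N\sum_{v\in\Omega_i,\beta\in\script{A}_i} J_{i,v}^{(\beta)}(t)(u_{i,v}^{(\beta)}(t)-\theta_{i,v}(t)).
\end{align*}
The linear cross-terms already match the right-hand side of \eqref{eq:dpp-lem}, so only the three squared sums need to be absorbed into the constant $B$.

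For the first squared sum, the bounds $0\le \gamma_i(t)\le u_i^{max}$ and $0\le u_i(t)\le u_i^{max}$ give $(\gamma_i(t)-u_i(t))^2 \le (u_i^{max})^2$. For the second, note that $\theta_{i,\omega_i(t)}(t)\in[0,u_i^{max}]$ by \eqref{eq:LL4-lowcmod} (the indicator is $1$ at $v=\omega_i(t)$), so again $(\theta_{i,\omega_i(t)}(t)-u_i(t))^2 \le (u_i^{max})^2$. These two contributions together yield $\sum_i (u_i^{max})^2$.

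The main observation is for the $J$-queue squared sum: by the definition of $u_{i,v}^{(\beta)}(t)$ and the constraint \eqref{eq:LL4-lowcmod} on $\theta_{i,v}(t)$, both $u_{i,v}^{(\beta)}(t)$ and $\theta_{i,v}(t)$ vanish whenever $\omega_i(t)\ne v$. Hence for each $i$ the double sum over $v\in\Omega_i$, $\beta\in\script{A}_i$ collapses to the single value $v=\omega_i(t)$, contributing at most $|\script{A}_i|(u_i^{max})^2$. Summing over $i$ and halving gives $\tfrac{1}{2}\sum_i |\script{A}_i|(u_i^{max})^2$. Adding all three bounds produces exactly the constant $B$ in the statement, and subtracting $Vg(t)$ from both sides yields \eqref{eq:dpp-lem}.
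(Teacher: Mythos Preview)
Your proof is correct and follows essentially the same approach as the paper: square each queue update, use $\max[x,0]^2\le x^2$ for the projected queues, sum, and bound the quadratic remainder by $B$. In fact you spell out the verification that $B(t)\le B$ (in particular, the collapse of the $J_{i,v}^{(\beta)}$ squared terms to the single value $v=\omega_i(t)$), which the paper merely asserts without detail.
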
 
 
 \begin{proof} 
 From \eqref{eq:z-update} one has: 
 \begin{eqnarray*}
  \frac{Z_i(t+1)^2}{2} &=& \frac{Z_i(t)^2}{2}  + \frac{(\gamma_i(t) - u_i(t))^2}{2}  \\
  && + Z_i(t)(\gamma_i(t) - u_i(t)) 
  \end{eqnarray*}
 From \eqref{eq:q-update} and the fact that $\max[x, 0]^2 \leq x^2$, one has: 
 \begin{eqnarray*}
 \frac{Q_i(t+1)^2}{2} &\leq& \frac{Q_i(t)^2}{2} + \frac{(\theta_{i,\omega_i(t)}(t)-u_i(t))^2}{2} \\
 && + Q_i(t)(\theta_{i,\omega_i(t)}(t) - u_i(t)) 
 \end{eqnarray*}
 Similarly, from \eqref{eq:j-update}: 
 \begin{eqnarray*}
 \frac{J_{i,v}^{(\beta)}(t+1)^2}{2} &\leq& \frac{J_{i,v}^{(\beta)}(t)^2}{2} + \frac{(u_{i,v}^{(\beta)}(t) - \theta_{i,v}(t))^2}{2} \\
 && + J_{i,v}^{(\beta)}(t)(u_{i,v}^{(\beta)}(t) - \theta_{i,v}(t))
 \end{eqnarray*}
 Summing the above yields: 
 \begin{eqnarray*}
 \Delta(t) &\leq& B(t) + \sum_{i=1}^N Z_i(t)(\gamma_i(t) - u_i(t)) \\
 && + \sum_{i=1}^N Q_i(t)(\theta_{i,\omega_i(t)}(t) - u_i(t))  \\
 && +  \sum_{i=1}^N\sum_{v\in\Omega_i, \beta\in\script{A}_i}J_{i,v}^{(\beta)}(t)(u_{i,v}^{(\beta)}(t)-\theta_{i,v}(t)) 
 \end{eqnarray*}
 where $B(t)$ is a value that satisfies $B(t) \leq B$ for all $t$.  Adding $-Vg(t)$ to both sides proves the result. 
 \end{proof} 
 
Greedily minimizing the right-hand-side of \eqref{eq:dpp-lem} every slot $t$ leads to the following algorithm: 
Every slot $t$,  the game manager observes the queues $Z_i(t)$, $Q_i(t)$, $J_{i,v}^{(\beta)}(t)$ and the current $\bv{\omega}(t)$.  Then: 

\begin{itemize} 
\item \emph{Auxiliary variables $\gamma_i(t)$:} The game manager chooses $\bv{\gamma}(t) = (\gamma_1(t), \ldots, \gamma_N(t))$ 
as the solution to: 
\begin{eqnarray} 
\mbox{Maximize:} & V\phi(\gamma_1(t), \ldots, \gamma_N(t)) - \sum_{i=1}^NZ_i(t)\gamma_i(t) \nonumber \\
\mbox{Subject to:} & 0 \leq \gamma_i(t) \leq u_i^{max} \: \: \forall i \in \{1, \ldots, N\} \label{eq:gamma-solve} 
\end{eqnarray}

\item \emph{Auxiliary variables $\theta_{i,v}(t)$:} For each $i \in \{1, \ldots, N\}$ and 
$v\in\Omega_i$, choose $\theta_{i,v}(t) \in [0, u_i^{max}1\{\omega_i(t) = v\}]$ to minimize: 
\[ \sum_{i=1}^NQ_i(t)\theta_{i,\omega_i(t)}(t) - \sum_{i=1}^N\sum_{v\in\Omega_i, \beta \in \script{A}_i} J_{i,v}^{(\beta)}(t)\theta_{i,v}(t) \]

\item \emph{Suggested actions:}  Choose $\bv{\alpha}(t) \in \script{A}_1 \times \cdots \times \script{A}_N$ to minimize: 
\begin{eqnarray*}
-\sum_{i=1}^N[Z_i(t)+Q_i(t)]\hat{u}_i(\bv{\alpha}(t), \bv{\omega}(t)) \\
+ \sum_{i=1}^N\sum_{v\in\Omega_i, \beta\in\script{A}_i} J_{i,v}^{(\beta)}(t)\hat{u}_i\left( (\beta, \bv{\alpha}_{\overline{i}}(t)), \bv{\omega}(t)\right)1\{\omega_i(t) = v\} 
\end{eqnarray*}
The manager then sends suggested actions $\alpha_i(t)$ to each (participating) player
$i \in \{1, \ldots, N\}$. 

\item \emph{Queue update:} Update virtual queues via  \eqref{eq:q-update}, \eqref{eq:j-update}, \eqref{eq:z-update}. 
\end{itemize} 

This is an online algorithm that does not require knowledge of the probabilities $\pi[\bv{\omega}]$. 

\subsection{A closer look at the algorithm} 

The $\theta_{i,v}(t)$ selection in the above algorithm reduces to the following:  Every slot $t$, observe $\bv{\omega}(t)$ and the queues.  Then for each $i \in \{1, \ldots, N\}$ and $v \in \Omega_i$, choose: 
\[ \theta_{i,v}(t) = \left\{\begin{array}{ll}
u_i^{max} & \mbox{ if $\omega_i(t)=v$ and $Q_i(t) < \sum_{\beta\in\script{A}_i}J_{i,v}^{(\beta)}(t)$} \\
0 & \mbox{ otherwise}  
\end{array}\right. \]

The $\bv{\alpha}(t)$ decisions reduce to the following: Every slot $t$, observe $\bv{\omega}(t)$ and the queues. Then choose $\bv{\alpha}(t) \in \script{A}$ to minimize: 
\begin{eqnarray*}
 -\sum_{i=1}^N[Z_i(t) + Q_i(t)]\hat{u}_i(\bv{\alpha}(t), \bv{\omega}(t)) \\
 + \sum_{i=1}^N\sum_{\beta\in\script{A}_i} J_{i,\omega_i(t)}^{(\beta)}(t) \hat{u}_i\left((\beta, \bv{\alpha}_{\overline{i}}(t)), \bv{\omega}(t)  \right) 
\end{eqnarray*}

Finally, consider the case when the fairness function  is a separable sum of individual concave functions:  
\[ \phi(\gamma_1, \ldots, \gamma_N) = \sum_{i=1}^N \phi_i(\gamma_i) \]
Then the $\bv{\gamma}(t)$ decisions reduce to separately choosing $\gamma_i(t)$ for each $i \in \{1, \ldots, N\}$ as the value in the interval $[0, u_i^{max}]$ that maximizes $V\phi_i(\gamma_i(t)) - Z_i(t)\gamma_i(t)$.  For example, if 
$\phi_i(\gamma_i) = \log(1 + \gamma_i)$, then: 
\[ \gamma_i(t) = \left\{\begin{array}{cc}
u_i^{max} & \mbox{, if $Z_i(t) \leq 0$} \\
\left[ \frac{V}{Z_i(t)} - 1 \right]_{0}^{u_i^{max}} & \mbox{, if $Z_i(t)>0$}\end{array}\right. \]
where $[x]_0^{a}$ is defined: 
\[ [x]_0^a \defequiv \left\{\begin{array}{cc}
0 & \mbox{ if $x < 0$}\\
x & \mbox{ if $0 \leq x \leq a$} \\
a & \mbox{ if $x > a$} \end{array}\right. \]

\subsection{Performance analysis} 

For simplicity, assume all virtual queues are initially empty, so that $L(0)=0$. 
Define $\phi^*$ as the optimal value of the objective function for \eqref{eq:p1}-\eqref{eq:p5}. By 
equivalence of the transformations, $\phi^*$  is also the optimal value for problem \eqref{eq:LL1-lowcmod}-\eqref{eq:LL5-lowcmod}.  Define $\bv{X}(t)$ as the vector of all virtual queues $Z_i(t)$, $Q_i(t)$, $J_{i,v}^{(\beta)}(t)$, 
and define $\norm{\bv{X}(t)} \defequiv \sqrt{2L(t)}$. 

\begin{thm} \label{thm:lyap-performance} If $L(0)=0$ and the 
above algorithm is implemented using a fixed value $V\geq 0$, then: 

(a) For all slots $t>0$ one has: 
\[ \phi(\overline{\gamma}_1(t), \ldots, \overline{\gamma}_N(t)) \geq \phi^* - B/V \]

(b) All virtual queues $Z_i(t)$, $Q_i(t)$, $J_{i,v}^{(\beta)}(t)$ are mean rate stable, so all  constraints
\eqref{eq:LLa-lowcmod}-\eqref{eq:LL5-lowcmod} are satisfied. 

(c) For all slots $t>0$ the virtual queue sizes satisfy: 
\[ \frac{\expect{\norm{\bv{X}(t)}}}{t} \leq \sqrt{\frac{2B + 2V(g_{max} - \phi^*)}{t}} \]
where $g_{max}$ is the maximum possible value for $g(t)$, being the maximum of $\phi(\gamma_1, \ldots, \gamma_N)$ over $\gamma_i \in [0, u_i^{max}]$ for all $i \in \{1, \ldots, N\}$.\footnote{In the special case when $\phi(\gamma_1, \ldots, \gamma_N)$ is entrywise nondecreasing, then $g_{max} \defequiv \phi(u_1^{max}, \ldots, u_N^{max})$.}   
\end{thm}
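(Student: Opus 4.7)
The plan is to follow the standard Lyapunov drift-plus-penalty template, using the bound \eqref{eq:dpp-lem} as the starting point. The algorithm is, by construction, the greedy minimizer of the right-hand side of \eqref{eq:dpp-lem} over all feasible choices of $\bv{\alpha}(t)$, $\theta_{i,v}(t)$, and $\bv{\gamma}(t)$, subject only to \eqref{eq:LLGGG}, \eqref{eq:LL4-lowcmod}, \eqref{eq:LL5-lowcmod}. So the drift-plus-penalty under the algorithm is upper-bounded by the drift-plus-penalty of \emph{any} alternative feasible choice, in particular by a well-chosen stationary randomized comparison policy.

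First, I would invoke the existence of an optimal stationary randomized policy for \eqref{eq:LL1-lowcmod}-\eqref{eq:LL5-lowcmod}, which by the equivalence argument already given in the paper attains objective value $\phi^*$. Specifically, there exist random variables $\bv{\alpha}^*(t)$, $\theta_{i,v}^*(t)$, $\gamma_i^*(t)$ that depend on $\bv{\omega}(t)$ in a stationary i.i.d.\ way and satisfy $\expect{\gamma_i^*(t) - u_i^*(t)} = 0$, $\expect{\theta_{i,\omega_i(t)}^*(t) - u_i^*(t)} \leq 0$, $\expect{u_{i,v}^{(\beta)*}(t) - \theta_{i,v}^*(t)} \leq 0$, and $\expect{g^*(t)} = \phi^*$. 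Plugging this policy into the right-hand side of \eqref{eq:dpp-lem}, taking conditional expectations given $\bv{X}(t)$, and using independence of $\bv{\omega}(t)$ from $\bv{X}(t)$ (which is determined by history up to slot $t$), gives
\begin{equation*}
\expect{\Delta(t) - Vg(t) \mid \bv{X}(t)} \leq B - V\phi^*.
\end{equation*}
Since the algorithm minimizes the RHS of \eqref{eq:dpp-lem} over feasible choices, the same bound holds for the algorithm itself.

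Next, I would take total expectations, telescope from $\tau = 0$ to $t-1$, and use $L(0) = 0$ and $L(t) \geq 0$. This yields $-V\sum_{\tau=0}^{t-1}\expect{g(\tau)} \leq t(B - V\phi^*)$, hence $\overline{g}(t) \geq \phi^* - B/V$. Combining with Jensen's inequality \eqref{eq:jensen} immediately gives part (a). For part (c), the same telescoped inequality rearranged the other way yields $\expect{L(t)} \leq tB + Vt(g_{max} - \phi^*)$, using $g(\tau) \leq g_{max}$. Then $\expect{\norm{\bv{X}(t)}} = \expect{\sqrt{2L(t)}} \leq \sqrt{2\expect{L(t)}}$ by Jensen applied to the concave square-root function, and dividing by $t$ gives the stated bound. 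Part (b) is then a direct corollary: each individual virtual queue is dominated by $\norm{\bv{X}(t)}$, so $\expect{|Q_i(t)|}/t \to 0$, $\expect{|Z_i(t)|}/t \to 0$, $\expect{|J_{i,v}^{(\beta)}(t)|}/t \to 0$, which is mean rate stability, and this in turn yields the time-average constraints \eqref{eq:LLa-lowcmod}-\eqref{eq:LL2-lowcmod}.

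The main obstacle is the very first step: producing the stationary randomized comparison policy that achieves $\expect{g(t)} = \phi^*$ while satisfying the constraints in expectation each slot. The paper has sketched this equivalence (the $\bv{\alpha}^*(t)$ drawn from $Pr[\bv{\alpha}|\bv{\omega}(t)]$, with $\theta_{i,v}^*(t) = \theta_i(v)\mathbf{1}\{\omega_i(t)=v\}$ and $\gamma_i^*(t) = u_i^*$), so I would cite that construction rather than reprove it. Everything else is essentially algebraic manipulation of the drift bound together with Jensen's inequality applied twice (once to $\phi$, once to $\sqrt{\cdot}$).
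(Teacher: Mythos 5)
Your proposal is correct and follows essentially the same argument as the paper: bound the drift-plus-penalty via \eqref{eq:dpp-lem}, compare against the stationary randomized policy built from the optimal solution to \eqref{eq:p1}--\eqref{eq:p5} (whose per-slot expectations satisfy the constraints by \eqref{eq:NE-better1}--\eqref{eq:NE-better2}), telescope with $L(0)=0$, and extract (a) and (c) by rearranging and applying Jensen's inequality, with (b) following from (c). The only cosmetic differences are that you condition on $\bv{X}(t)$ before taking total expectations and phrase the queue bound as Jensen applied to $\sqrt{\cdot}$ rather than $\expect{\norm{\bv{X}(T)}}^2 \leq \expect{\norm{\bv{X}(T)}^2}$, which are equivalent steps.
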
 

\begin{proof} 
Fix a time slot $t$.  Given the existing queue values $Z_i(t)$, $Q_i(t)$, $J_{i,v}^{(\beta)}(t)$ and the observed
$\bv{\omega}(t)$, the algorithm 
makes decisions $\gamma_i(t)$, $\theta_{i,v}(t)$, $\bv{\alpha}(t)$ to minimize 
the right-hand-side of \eqref{eq:dpp-lem}.  Thus: 
\begin{align}
&\Delta(t) - Vg(t) \leq \nonumber \\
&B - V\phi(\gamma_1^*(t), \ldots, \gamma_N^*(t)) \nonumber \\
& + \sum_{i=1}^NZ_i(t)(\gamma_i^*(t) - u_i^*(t)) \nonumber \\
&+\sum_{i=1}^NQ_i(t)(\theta^*_{i,\omega_i(t)}(t) - u_i^*(t)) \nonumber \\
&+\sum_{i=1}^N\sum_{v\in\Omega_i, \beta\in\script{A}_i}J_{i,v}^{(\beta)}(t)(u_{i,v}^{*(\beta)}(t) - \theta_{i,v}^*(t)) \label{eq:dpp-proof} 
\end{align}
for any alternative decisions $\gamma_i^*(t)$, $\theta_{i,v}^*(t)$, $\bv{\alpha}^*(t)$ that satisfy \eqref{eq:LL4-lowcmod}-\eqref{eq:LL5-lowcmod}, and where: 
\begin{eqnarray*}
u_i^*(t) &\defequiv& \hat{u}_i(\bv{\alpha}^*(t), \bv{\omega}(t)) \\
u_{i,v}^{*(\beta)}(t) &\defequiv& \hat{u}_i\left( \left( \beta , \bv{\alpha}_{\overline{i}}^*(t) \right)  , \bv{\omega}(t)  \right) 
\end{eqnarray*}
Now consider alternative decisions defined by the optimal solution to problem \eqref{eq:p1}-\eqref{eq:p5}. 
Specifically, choose $\bv{\alpha}^*(t) \in \script{A}$ to be conditionally independent of  current queue states, given the observed $\bv{\omega}(t)$, according to the probability distribution $Pr[\bv{\alpha}|\bv{\omega}]$ that solves
\eqref{eq:p1}-\eqref{eq:p5}.  Choose $\theta_{i,v}^*(t) = \theta_{i,v}^*$, where $\theta_{i,v}^*$ are the optimal values for the solution to \eqref{eq:p1}-\eqref{eq:p5}.  Finally, define $u_i^* = \expect{u_i^*(t)}$, being the expected utility of player $i$ under the optimal distribution $Pr[\bv{\alpha}|\bv{\omega}]$, and note that: 
\[ \phi(u_1^*, \ldots, u_N^*) = \phi^* \]
Choose $\gamma_i^*(t) = u_i^*$ for all $i \in \{1, \ldots, N\}$.  Then \eqref{eq:NE-better1}-\eqref{eq:NE-better2} 
imply: 
\begin{eqnarray}
\expect{\phi(\gamma_1^*(t), \ldots, \gamma_N^*(t))} &=& \phi^* \label{eq:ab1} \\
\expect{\gamma_i^*(t)} &=& \expect{u_i^*(t)}  \label{eq:ab2} \\
\expect{\theta_{i, \omega_i(t)}^*(t)} &\leq& \expect{u_i^*(t)}   \label{eq:ab3} \\
\expect{u_{i,v}^{*(\beta)}(t)} &\leq& \expect{\theta_{i,v}^*(t)} \label{eq:ab4} 
\end{eqnarray}
Taking expectations of \eqref{eq:dpp-proof} and substituting \eqref{eq:ab1}-\eqref{eq:ab4} gives: 
\[ \expect{\Delta(t) - Vg(t)} \leq B - V\phi^* \]
The above inequality holds for all $t \in \{0, 1, 2, \ldots\}$. Fix a slot $T>0$. Summing the above over slots $t \in \{0, 1, 2, \ldots, T-1\}$ and using $L(0) = 0$ gives: 
\begin{equation} \label{eq:reuse-proof} 
\expect{L(T)} - V\sum_{t=0}^{T-1} \expect{g(t)} \leq BT - V\phi^* T 
\end{equation} 
Rearranging \eqref{eq:reuse-proof} and using the definition of $g(t)$ gives: 
\[ \frac{1}{T}\sum_{t=0}^{T-1} \expect{\phi(\gamma_1(t), \ldots, \gamma_N(t))} \geq \phi^* - \frac{B}{V} - \frac{\expect{L(T)}}{VT} \]
Using Jensen's inequality and $\expect{L(T)} \geq 0$ proves part (a). 

Again rearranging \eqref{eq:reuse-proof} gives: 
\[ \expect{\norm{\bv{X}(T)}^2} \leq 2BT + 2VT (g_{max} - \phi^*) \]
Using the fact that $\expect{\norm{\bv{X}(T)}}^2 \leq \expect{\norm{\bv{X}(t)}^2}$, dividing by $T^2$, and taking square roots proves part (c).  Part (b) follows immediately from part (c). 

\end{proof} 

Define $\epsilon = 1/V$.  Theorem \ref{thm:lyap-performance} shows that average utility is within 
$O(\epsilon)$ of optimality.  Part (c) of the theorem implies that constraint violation is within $O(\epsilon)$ after
time $O(1/\epsilon^3)$.  If a \emph{Slater condition} holds, this convergence time is improved to $O(1/\epsilon^2)$ \cite{sno-text}. Similar bounds can be shown for infinite horizon time averages (rather than time average expectations) \cite{neely-lyap-opt}. 

\subsection{Discussion} 

The online algorithm ensures the constraints \eqref{eq:LLz-lowcmod}-\eqref{eq:LL2-lowcmod} are satisfied.  This shows that average utility of each player $i$ is greater than or equal to the achievable utility if the player were to constantly use the best pure strategy.  The best pure strategy of player $i$ is the one that uses the optimal action $\alpha_i^*(\omega_i)$ as a function of the observed $\omega_i(t)$.   This corresponds to the constraints in \eqref{eq:NE-better1}-\eqref{eq:NE-better2}.   If an algorithm makes random decisions independently every slot according to a conditional probability mass function $Pr[\bv{\alpha}|\bv{\omega}]$,  then constraints \eqref{eq:LLz-lowcmod}-\eqref{eq:LL2-lowcmod} imply player $i$ cannot do better under \emph{any} alternative decisions, possibly those that mix pure strategies with different mixing probabilities every slot.  A subtlety is that the online algorithm does \emph{not} make stationary and randomized decisions.  Thus, it is not clear if a player with knowledge of the algorithm could improve average utility by making alternative decisions that do not correspond to a pure strategy.   Of course, the online algorithm yields time averages that correspond to a desired $Pr[\bv{\alpha}|\bv{\omega}]$.  Thus, a potential fix is to run the online algorithm in the background and make $\bv{\alpha}(t)$ decisions according to the  time averages that emerge.   

\section{Simplification under a special case}

Consider the special case when there is a single random event process $\omega_0(t)$ that is known only to the game manager.  Thus, there are no random event processes $\omega_i(t)$ for any player $i \in \{1, \ldots, N\}$. 
This can be treated in the framework of the previous section by formally defining the sets $\Omega_i$ to consist of a single element 0, so that $\omega_i(t)=0$ for all slots $t$ and all players $i \in \{1, \ldots, N\}$.   However, this special case can be treated more simply by removing the auxiliary variables $\theta_{i,v}(t)$. 
Indeed, 
for all $i \in \{1, \ldots, N\}$ and all $\beta \in \script{A}_i$, define: 
\begin{eqnarray*}
u_i(t) &=& \hat{u}_i\left(\bv{\alpha}(t), \omega_0(t)\right) \\
 u_i^{(\beta)}(t) &=& \hat{u}_i\left((\beta, \bv{\alpha}_{\overline{i}}(t)), \omega_0(t)\right) 
\end{eqnarray*}

In this special case, the problem  \eqref{eq:LL1-lowcmod}-\eqref{eq:LL5-lowcmod} reduces to:

\begin{align}
&\hspace{-.1in}\mbox{Maximize:} \nonumber \\
& \hspace{+0in} \liminf_{t\rightarrow\infty} \overline{g}(t) \label{eq:sp1} \\
&\hspace{-.1in}\mbox{Subject to:} \nonumber \\
& \lim_{t\rightarrow\infty} |\overline{\gamma}_i(t) - \overline{u}_i(t)| = 0 \: \: \forall i \in \{1, \ldots, N\} \label{eq:sp2} \\
&\hspace{+0in} \liminf_{t\rightarrow\infty} \left[\overline{u}_i(t) - \overline{u}_i^{(\beta)}(t)\right] \geq 0 \nonumber \\
&  \hspace{+.6in} \forall i \in \{1, \ldots, N\},  \forall \beta \in \script{A}_i  \label{eq:sp3} \\
&\hspace{+0in} \bv{\alpha}(t) \in \script{A} \: \: \forall t, \forall i \in \{1, \ldots, N\} \label{eq:sp4} \\
& 0 \leq \gamma_i(t) \leq u_i^{max} \hspace{+.5in} \forall t, \forall i \in \{1, \ldots, N\} \label{eq:sp5} 
\end{align}
The above constraints are different from \eqref{eq:LL1-lowcmod}-\eqref{eq:LL5-lowcmod} because the  variables $\theta_{i,v}(t)$ have been removed, the constraint \eqref{eq:LL2-lowcmod} has been removed, and  the constraint \eqref{eq:LLz-lowcmod} has been modified to \eqref{eq:sp3}.  

Since the constraint \eqref{eq:sp2} is identical to constraint \eqref{eq:LLa-lowcmod}, it is enforced by
the same virtual queue $Z_i(t)$ with update equation given in  \eqref{eq:z-update}.  However, the constraint 
\eqref{eq:sp3} is enforced by the following new constraint for all $i \in \{1, \ldots, N\}$ and $\beta \in \script{A}_i$: 
\begin{equation} \label{eq:queue-new} 
Q_i^{(\beta)}(t+1) =   \max[Q_i^{(\beta)}(t) + u_i^{(\beta)}(t) - u_i(t), 0]  
\end{equation}

The resulting algorithm is as follows: Every slot $t$, the game manager observes the queues $Z_i(t), 
Q_i^{(\beta)}(t)$ and the current $\omega_0(t)$.  Then: 
\begin{itemize} 
\item \emph{Auxiliary variables $\gamma_i(t)$:}  Choose
$(\gamma_1(t), \ldots, \gamma_N(t))$ as before (that is, according to \eqref{eq:gamma-solve}). 

\item \emph{Suggested actions:} Choose $\bv{\alpha}(t) \in \script{A}_i\times \cdots \times \script{A}_N$
to minimize: 
\begin{align*} 
& -\sum_{i=1}^N \left[Z_i(t)+\sum_{\beta\in\script{A}_i}Q_i^{(\beta)}(t)\right]\hat{u}_i(\bv{\alpha}(t), \omega_0(t)) \\
& + \sum_{i=1}^N\sum_{\beta\in\script{A}_i} Q_i^{(\beta)}(t)\hat{u}_i((\beta, \bv{\alpha}_{\overline{i}}(t)), \omega_0(t)) 
\end{align*} 

\item \emph{Virtual queue udpate:} Update queues $Z_i(t)$ and $Q_i^{(\beta)}(t)$ via
\eqref{eq:z-update} and 
\eqref{eq:queue-new}. 
\end{itemize} 

In this special case when no player observes any random events, the set of pure strategies for each player $i$ coincides with the set of actions $\script{A}_i$. Thus, this 
algorithm is the same as that given in the conference version of this paper \cite{repeated-games-allerton}, where it is shown to give  performance similar to that of Theorem 
\ref{thm:lyap-performance}.

%
%

\section{Conclusions} 

This paper considered a simple game structure for repeated stochastic games.  Every slot a 
random vector $\bv{\omega}(t)$ is generated by nature.  Players observe different components 
of this vector and then choose individual actions with the help of a game manager. 
A coarse correlated equilibrium (CCE) for this stochastic game was defined to ensure that participating players earn at least as much utility as they could earn by individually deviating (and hence receiving no help from the manager).   The paper considered optimizing a concave function of the vector of time average utilities (called a fairness function), subject to the CCE constraints.  Lyapunov optimization was used to solve the problem over time, without requiring knowledge of the probabilities for the $\bv{\omega}(t)$ process. Similar techniques can be used to enforce correlated equilibrium (CE) constraints.

\section*{Appendix A --- Proof of Lemma \ref{lem:equiv-virtual}} 

Suppose $Pr[\bv{s}]$ and $Pr[\bv{\alpha}|\bv{\omega}]$ satisfy \eqref{eq:generate}. 
The following identities are useful. For all $i \in \{1, \ldots, N\}$ one has: 
\begin{equation}
\sum_{\bv{\omega}\in\Omega}\sum_{\bv{\alpha}\in\script{A}} \pi[\bv{\omega}]Pr[\bv{\alpha}|\bv{\omega}]\hat{u}_i(\bv{\alpha}, \bv{\omega}) = \sum_{\bv{s}\in\script{S}} Pr[\bv{s}]h_i(\bv{s}) \label{eq:useful-id1} 
\end{equation} 
This can be proven by substituting \eqref{eq:generate} into the left-hand-side of \eqref{eq:useful-id1} and using \eqref{eq:hi}. Likewise, for any $i \in \{1, \ldots, N\}$ and any pure strategy function $b_i^{(r_i)}(\omega_i)$ for player 
$i$ (where $r_i \in \script{S}_i$), one has: 
\begin{align} 
&\sum_{\bv{\omega}\in\Omega}\sum_{\bv{\alpha}\in\script{A}} \pi[\bv{\omega}]Pr[\bv{\alpha}|\bv{\omega}]\hat{u}_i((b_i^{(r_i)}(\omega_i), \bv{\alpha}_{\overline{i}}), \bv{\omega})  \nonumber \\
&= \sum_{\bv{s}\in\script{S}} Pr[\bv{s}]h_i(r_i, \bv{s}_{\overline{i}}) \label{eq:useful-id2} 
\end{align}

\begin{proof} (Lemma \ref{lem:equiv-virtual}a)  
Suppose $Pr[\bv{\alpha}|\bv{\omega}]$ satisfies the constraints \eqref{eq:NE-better1}-\eqref{eq:NE-better2}.  
Fix $i \in \{1, \ldots, N\}$. Substituting \eqref{eq:useful-id1} into the left-hand-side of \eqref{eq:NE-better1} gives: 
\begin{align} 
\sum_{\bv{s}\in\script{S}}Pr[\bv{s}]h_i(\bv{s}) \geq \sum_{\bv{\omega}\in\Omega}\sum_{\bv{\alpha}\in\script{A}}\pi[\bv{\omega}]Pr[\bv{\alpha}|\bv{\omega}]\theta_i(\omega_i) \label{eq:yuay1} 
\end{align} 
Now fix an index $r_i \in \script{S}_i$.  For each $v_i \in \Omega_i$, define $\beta_i = b_i^{(r_i)}(v_i)$.  Substituting this $\beta_i$ into \eqref{eq:NE-better2} gives: 
\begin{align*} 
 &\sum_{\bv{\omega} \in \Omega | \omega_i = v_i}\sum_{\bv{\alpha}\in\script{A}} \pi[\bv{\omega}] Pr[\bv{\alpha}|\bv{\omega}] 
\theta_i(v_i) \nonumber \\
& \geq \sum_{\bv{\omega}\in\Omega| \omega_i=v_i} \sum_{\bv{\alpha}\in\script{A}} \pi[\bv{\omega}]Pr[\bv{\alpha}|\bv{\omega}]   \hat{u}_i((b_i^{(r_i)}(v_i), \bv{\alpha}_{\overline{i}}), \bv{\omega}) 
\end{align*}
Summing the above over all $v_i \in \Omega_i$ and using \eqref{eq:useful-id2} gives: 
\begin{align*} 
 &\sum_{\bv{\omega} \in \Omega}\sum_{\bv{\alpha}\in\script{A}} \pi[\bv{\omega}] Pr[\bv{\alpha}|\bv{\omega}] 
\theta_i(v_i)  \geq \sum_{\bv{s}\in\script{S}}Pr[\bv{s}] h_i(r_i, \bv{s}_{\overline{i}})
\end{align*}
Combining the above with \eqref{eq:yuay1} 
proves that  the constraints \eqref{eq:NE-virtual} hold. 

Now Suppose $Pr[\bv{s}]$ satisfies the constraints \eqref{eq:NE-virtual}.  
For each $i \in \{1, \ldots, N\}$ and $v_i \in \Omega_i$, define the function $b_i(v_i)$ as the action $\beta_i \in \script{A}_i$ 
that maximizes: 
\[ \sum_{\bv{\omega}\in\Omega| \omega_i=v_i} \sum_{\bv{\alpha} \in \script{A}} \pi[\bv{\omega}]Pr[\bv{\alpha}|\bv{\omega}]\hat{u}_i\left((\beta_i, \bv{\alpha}_{\overline{i}}), \bv{\omega} \right) \]
Define $\theta_i(v_i)$ as the corresponding maximum divided by $ \sum_{\bv{\omega}\in\Omega| \omega_i=v_i} \sum_{\bv{\alpha} \in \script{A}} \pi[\bv{\omega}]Pr[\bv{\alpha}|\bv{\omega}]$.  
Then constraints \eqref{eq:NE-better2} hold by construction.  Now let $r_i$ be the index for the pure strategy of player $i$ that 
selects actions according to the function $b_i(v_i)$.  Then: 
\begin{align} 
&\sum_{\bv{\omega} \in \Omega | \omega_i = v_i}\sum_{\bv{\alpha}\in\script{A}}\pi[\bv{\omega}]Pr[\bv{\alpha}|\bv{\omega}]\theta_i(v_i) \nonumber \\
&= \sum_{\bv{\omega}\in\Omega|\omega_i=v_i}\sum_{\bv{\alpha}\in\script{A}} \pi[\bv{\omega}]Pr[\bv{\alpha}|\bv{\omega}]\hat{u}_i((b_i^{(r_i)}(v_i), \bv{\alpha}_{\overline{i}}), \bv{\omega})
\end{align} 
Summing the above over all $v_i \in \Omega_i$ gives: 
\begin{align} 
&\sum_{\bv{\omega} \in \Omega}\sum_{\bv{\alpha}\in\script{A}}\pi[\bv{\omega}]Pr[\bv{\alpha}|\bv{\omega}]\theta_i(\omega_i) \nonumber \\
&= \sum_{\bv{\omega}\in\Omega}\sum_{\bv{\alpha}\in\script{A}} \pi[\bv{\omega}]Pr[\bv{\alpha}|\bv{\omega}]\hat{u}_i((b_i^{(r_i)}(\omega_i), \bv{\alpha}_{\overline{i}}), \bv{\omega}) \nonumber \\
&= \sum_{\bv{s}\in\script{S}} Pr[\bv{s}]h_i(r_i, \bv{s}_{\overline{i}})  \label{eq:dood1} \\
&\leq \sum_{\bv{s}\in\script{S}} Pr[\bv{s}]h_i(\bv{s}) \label{eq:dood2} \\
&= \sum_{\bv{\omega}\in\Omega}\sum_{\bv{\alpha}\in\script{A}} \pi[\bv{\omega}]Pr[\bv{\alpha}|\bv{\omega}]\hat{u}_i(\bv{\alpha}, \bv{\omega}) \label{eq:dood3} 
\end{align} 
where \eqref{eq:dood1} follows from \eqref{eq:useful-id2}, \eqref{eq:dood2} follows from \eqref{eq:NE-virtual}, 
and \eqref{eq:dood3} follows from \eqref{eq:useful-id1}.  Thus, the constraints \eqref{eq:NE-better1} hold. 
\end{proof} 

The proof of Lemma \ref{lem:equiv-virtual}b is similar to that of Lemma \ref{lem:equiv-virtual}a and is omitted for brevity.

\begin{proof} (Lemma \ref{lem:equiv-virtual}c)
Suppose $Pr[\bv{s}]$ is a NE of the virtual static game.  Then it has the product form \eqref{eq:product-form-virtual}. 
Lemma \ref{lem:product-form} ensures that the corresponding 
$Pr[\bv{\alpha}|\bv{\omega}]$ has the product form \eqref{eq:product-form-conditional}.  Further, because $Pr[\bv{s}]$ satisfies
\eqref{eq:NE-virtual}, 
Lemma \ref{lem:equiv-virtual}a ensures that $Pr[\bv{\alpha}|\bv{\omega}]$  satisfies \eqref{eq:NE-better1}-\eqref{eq:NE-better2}. 
Hence, $Pr[\bv{\alpha}|\bv{\omega}]$ is a NE for the stochastic game. 
\end{proof} 

\section{Appendix B --- Proof that $\script{E}_{NE}^{stoc} \subseteq \script{E}_{CE}^{stoc}$}

Suppose that $Pr[\bv{\alpha}|\bv{\omega}]$ is a NE for the stochastic game, so that it has
the product form \eqref{eq:product-form-conditional} and satisfies the constraints \eqref{eq:NE-better1}-\eqref{eq:NE-better2}.  Suppose $Pr[\bv{\alpha}|\bv{\omega}]$ is \emph{not} a CE for the stochastic game, so that 
constraints \eqref{eq:CE-better1}-\eqref{eq:CE-better2} are \emph{not} satisfied.  The goal is to reach a contradiction. 

Since the constraints \eqref{eq:CE-better1}-\eqref{eq:CE-better2} are not satisfied, by Lemma \ref{lem:equiv-all3} 
it follows that there exists a player $i \in \{1, \ldots, N\}$ and a  randomized function $X_i(\omega_i, \alpha_i)$ for which: 
\begin{align}
&\expect{u_i(t)} <  \nonumber \\
&\sum_{\bv{\omega}\in\Omega}\sum_{\bv{\alpha}\in \script{A}} \pi[\bv{\omega}] Pr[\bv{\alpha}|\bv{\omega}]\expect{\hat{u}_i((X_i(\omega_i, \alpha_i), \bv{\alpha}_{\overline{i}}), \bv{\omega})} \label{eq:fooFOO} 
\end{align}
Because the product form property \eqref{eq:product-form-conditional} holds, one has: 
\begin{eqnarray}
Pr[\bv{\alpha}|\bv{\omega}] = Pr[\alpha_i|\omega_i]\prod_{j\neq i} Pr[\alpha_j|\omega_j] \label{eq:foopers} 
\end{eqnarray}
Now consider the following alternative strategy for player $i$, based only on knowledge of its observed $\omega_i$ (without knowledge of $\alpha_i$):  Define the random action $\tilde{X}_i(\omega_i)$ that observes $\omega_i$, independently generates 
a random element $\tilde{\alpha}_i \in \script{A}_i$ according to the conditional distribution $Pr[\alpha_i|\omega_i]$, and then defines $\tilde{X}_i(\omega_i) = X_i(\omega_i, \tilde{\alpha}_i)$. This policy uses the conditional distribution associated with the actual $\alpha_i$ value, but does not require knowledge of this actual value. 
Define $val_i$ as the expected utility of player $i$ under this alternative randomized strategy $\tilde{X}_i(\omega_i)$: 
\begin{align*} 
&val_i \defequiv \sum_{\bv{\omega} \in \Omega} \pi[\bv{\omega}]\sum_{\bv{\alpha}_{\overline{i}} \in \script{A}_{\overline{i}}} \sum_{\tilde{\alpha}_i \in \script{A}_i} \prod_{j\neq i} Pr[\alpha_j|\omega_j] Pr[\tilde{\alpha}_i|\omega_i] \times \\
&  \hspace{+.5in} \expect{\hat{u}_i((X_i(\omega_i, \tilde{\alpha}_i), \bv{\alpha}_{\overline{i}}), \bv{\omega})} 
\end{align*}
By \eqref{eq:foopers}, $val_i$ is the same as the right-hand-side of \eqref{eq:fooFOO}, and so $\expect{u_i(t)}< val_i$. 
On the other hand, since $Pr[\bv{\alpha}|\bv{\omega}]$ is a NE for the stochastic game, Lemma \ref{lem:equiv-all1} 
ensures that no such alternative randomized strategy $\tilde{X}_i(\omega_i)$ can improve the average utility of player $i$, so that $\expect{u_i(t)} \geq val_i$.   This is the desired contradiction.

\bibliographystyle{unsrt}
\bibliography{../../latex-mit/bibliography/refs}
\end{document}